\newcommand{\maximize}{\mathop{\textrm{maximize}}}
\newtheorem{definition}{Definition}
\newtheorem{theorem}{Theorem}
\definecolor{LightCyan}{rgb}{0.88,1,1}
\definecolor{Gray}{gray}{0.6}
\begin{document}
%
\title{Real-Time Dynamic Spectrum Management for Multi-User Multi-Carrier Communication Systems}

\newcommand{\argmax}{\operatornamewithlimits{argmax}}

%
%
%

\author{Paschalis~Tsiaflakis\thanks{P. Tsiaflakis is a postdoctoral fellow funded by the Research Foundation-Flanders (FWO).
This research work was carried out at the ESAT Laboratory of KU Leuven, in the frame of KU Leuven Research Council CoE  PFV/10/002 (OPTEC), KU Leuven Research Council Bilateral Scientific Cooperation Project Tsinghua University 2012-2014,
the Belgian Programme on Interuniversity Attraction Poles initiated by the
Belgian Federal Science Policy Office IUAP P7/23 `Belgian network on stochastic modeling analysis design and optimization of communication systems' (BESTCOM) 2012-2017,
Concerted Research Action GOA-MaNet, Research Project FWO nr. G.091213 'Cross-layer optimization with real-time adaptive dynamic spectrum management for fourth generation broadband access networks'.
The scientific responsibility is assumed by its authors.}\thanks{P. Tsiaflakis and M. Moonen are with the Department of Electrical Engineering (ESAT), STADIUS Center for Dynamical Systems, Signal Processing and Data Analytics, KU Leuven, Kasteelpark Arenberg 10 bus 2446, B-3001 Leuven. (email: Paschalis.Tsiaflakis@esat.kuleuven.be; Marc.Moonen@esat.kuleuven.be)}\thanks{F. Glineur is affiliated with (1) the Center for Operations Research and Econometrics, Universit\'e catholique de Louvain, B-1348 Louvain-la-Neuve, Belgium, and (2) Information and Communication Technologies, Electronics and Applied Mathematics Institute, Universit\'e catholique de Louvain, B-1348 Louvain-la-Neuve, Belgium. (email: Francois.Glineur@uclouvain.be)}\thanks{This work has been submitted to the IEEE for possible publication. Copyright may be transferred without notice, after which this version may no longer be accessible.}, \IEEEmembership{Member, IEEE}, Fran\c{c}ois~Glineur, and~Marc~Moonen, \IEEEmembership{Fellow, IEEE}}

\maketitle

\begin{abstract}
Dynamic spectrum management is recognized as a key technique to tackle interference
in multi-user multi-carrier communication systems and networks. 
However existing dynamic spectrum management algorithms may not be suitable when
the available computation time and compute power are limited, i.e., when
a very fast responsiveness is required. In this paper, we present 
a new paradigm, theory and algorithm for real-time dynamic spectrum management (RT-DSM)
under tight real-time constraints. Specifically, a RT-DSM algorithm can 
be stopped at any point in time while guaranteeing a feasible and improved solution.
This is enabled by the introduction of a novel difference-of-variables (DoV) 
transformation and problem reformulation, for
which a primal coordinate ascent approach is proposed with exact line search
via a logarithmicly scaled grid search. The concrete proposed algorithm is referred to
as iterative power difference balancing (IPDB). 
Simulations for different realistic wireline and wireless interference limited systems
demonstrate its good performance, low complexity and wide applicability 
under different configurations.
\end{abstract}

\begin{IEEEkeywords}
Dynamic spectrum management, interference management, multi-user, multi-carrier,
real-time
\end{IEEEkeywords}


%
\IEEEpeerreviewmaketitle

%
%
%
%

 

\section{Introduction}\label{sec:introduction}
Interference is a key performance-limiting factor in many
state-of-the-art communication systems and networks \cite{Gesbert2010,Lee2012,Zheng2012,Lee2012b,DSM_Song,vectoredTrans,raphaelthesis}. 
In particular, when multiple users transmit simultaneously in 
a common frequency bandwidth, significant interference levels
can be observed among them in practical systems. This can result in large data rate reductions \cite{raphaelthesis,vectoredTrans,Maes2013}, 
poor spectral and energy efficiency \cite{Mao2013,energyDSLnordstrom,Tsiaflakis2009a,greenercopper, Hooghe2011, Monteiro2009, Wolkerstorfer2012a}, unstable behaviour due to transient 
interference \cite{Panigrahi2006,Jagannathan2008,Sarhan2012,Ginis2005}, unfairness due to unbalanced interference impact \cite{Tsiaflakis2011} and other 
performance degradations.

Dynamic spectrum management (DSM) is recognized as an important technique
to tackle these performance degradations in such interference limited systems
\cite{DSM_Song,vectoredTrans,raphaelthesis,Maes2013,dsb,energyDSLnordstrom,Tsiaflakis2009a,greenercopper, 
Hooghe2011, Monteiro2009, Wolkerstorfer2012a,Panigrahi2006,Jagannathan2008,Sarhan2012,Ginis2005,
Tsiaflakis2011,Statovci2006,Forouzan2012,Moraes2013,Leung2013,Huberman09, Wang2012, dual_journal,DSMluo,OSB,
ISB_Raphael,ISB_WeiYu,tsiaflakis_bbosb,Weeraddana2011,Wolkerstorfer2012b,ASB,Son2011,iterativeWaterfilling,Huberman2012,Luo2006,scalejournal,MIW}.
In digital subscriber line (DSL) literature,
DSM is typically categorized into three levels, namely DSM 1, DSM 2 and DSM 3.
DSM~1 corresponds to single user management in terms of impulse-noise control,
delay parameter tuning and transmit spectrum shaping \cite{scholarpediaDSL}. DSM 2 addresses 
solutions where the transmit spectra of all users are jointly managed \cite{raphaelthesis,DSM_Song,Huberman09}
so as to prevent the destructive impact of interference.
DSM 3 is also referred to as vectoring and consists of the application of 
signal coordination methods that can actively cancel interference between users \cite{vectoredTrans,raphaelthesis,Maes2013,Moraes2013,Leung2013}. 
We want to briefly highlight here that the word 'dynamic' in DSM does not refer
to time dynamic adaptation of the spectrum management resources, but 
rather to the adaptation of the spectrum management resources taking the 
concrete physical channel conditions of the considered scenario into account.

In this paper the focus is on DSM through the management of transmit spectra for general 
multi-user multi-carrier systems, including wireline DSL systems (corresponding 
to DSM 2) as well as wireless interference limited systems. Here, the transmit spectra of 
all users in the system are jointly managed and optimized , where each user employs a 
multi-carrier transmission technique such as orthogonal frequency division multiplexing 
(OFDM) or discrete multitone (DMT). In the remainder of this paper, we will refer to this technique shortly by DSM, 
as this term is similarly used in other literature \cite{raphaelthesis,DSMluo,Huberman09}.
Furthermore we follow a standard interference channel system model where 
the interference is treated as additive white Gaussian (AWG) noise, which is a 
very common practical model in operational networks \cite{raphaelthesis}.

Research work on DSM has progressed significantly over 
the last decade. More specifically, a whole range of DSM 
algorithms has been proposed ranging from centralized \cite{OSB,ISB_Raphael,ISB_WeiYu,tsiaflakis_bbosb,Weeraddana2011,Wolkerstorfer2012b}, 
to distributed \cite{dsb,Ren2010,scalejournal,Chiang2008,MIW} and autonomous algorithms \cite{ASB,dsb,Son2011,iterativeWaterfilling,Huberman2012,Luo2006}. Each 
  of these has its specific properties in terms of computational complexity 
and level of coordination. We refer to \cite{Huberman09,dsb} and references 
therein for a comparison and an overview of DSM algorithms proposed 
in DSL literature. The DSL setting represents one 
relevant example of a multi-user multi-carrier interference channel. 
However DSM is also of interest in several wireless settings, where 
similar algorithms have been proposed. Examples are multicell downlink 
DSM or inter-cell interference coordination for heterogeneous networks \cite{Son2011}, 
DSM for multi-user multi-channel cellular relay 
networks \cite{Ren2010}, and OFDM based cognitive radio systems \cite{Bansal2008,Mao2013}.

However, none of the previously proposed DSM algorithms 
have addressed real-time computation constraints. More specifically, when 
computation time and compute power are limited, there is no guarantee that 
a suitable solution can be found with existing DSM algorithms. This is because existing 
DSM algorithms typically follow an iterative approach where it is not known 
in advance how many iterations
are required to converge to a feasible and good solution. Furthermore existing 
DSM algorithms typically follow a dual decomposition approach where solution 
feasibility and good performance is not guaranteed until after convergence. 
In addition, an important issue of tackling the nonconvex DSM 
problem in the dual domain is the possible non-zero duality gap as the number of 
subcarriers used in the multi-carrier transmission is finite \cite{dual_journal, DSMluo}.

Our focus in this paper is to tackle the above issues by a new 
paradigm and theory of \emph{real-time dynamic spectrum management}. The corresponding 
real-time dynamic spectrum management algorithms 
succeed in working under real-time constraints where the computation 
time and the compute power are limited. This property is highly 
desirable when real-time responsiveness to changes 
in the network is to be guaranteed, such as changing channels and noises, users joining or 
leaving the network, changing QoS requirements, crosslayer control, etc. 
To the best of our knowledge, there exists no literature on resource 
allocation for interference limited communication systems that addresses 
such real-time constraints. 

To enable this new paradigm, we first propose a novel transformation, referred to as the \emph{difference-of-variables} 
(DoV) transformation, which transforms the standard DSM 
problem into a problem with alternative primal variables, referred to as 
power difference variables. With this reformulation in hand, a first real-time 
DSM algorithm is proposed, which is referred to as \emph{iterative power difference 
balancing (IPDB)}. This algorithm combines the DoV problem
reformulation with a solution that follows a coordinate ascent approach
with exact line search via a logarithmicly scaled grid search. The combination
of these two ingredients results in an efficient algorithm for which the effectiveness
and real-time behaviour are analyzed and evaluated for different settings.

The paper is organized as follows. Section~\ref{sec:systemmodel} briefly 
describes the multi-user multi-carrier system model and DSM. Section~\ref{sec:realtime} first gives a definition of real-time 
DSM. Then the DoV transformation and problem are proposed. Finally, the IPDB algorithm is 
presented. This is extended with a procedure for 
dealing with inequality power constraints and equalization. The 
performance for different wireline and wireless scenarios and settings 
is presented in Section~\ref{sec:simulations}.

\section{System model and dynamic spectrum management}\label{sec:systemmodel}

We consider a multi-user communication system with a set $\mathcal{N}=\{1,\ldots,N\}$ of $N$ 
communication links over a common frequency band. Each link consists of
a transmitter-receiver pair, and is also referred to as a user. 
In addition, each user employs a multi-carrier transmission scheme,
such as OFDM or DMT. We assume perfect synchronization and a cyclic prefix length longer than the maximum channel length 
(considering direct as well as interference channels), so that the user 
data are transmitted independently and in parallel on the different subcarriers, 
also referred to as tones. The set of $K$ tones is denoted as $\mathcal{K}=\{1,\ldots,K\}$.
All users can transmit on all tones, resulting in overlapping
transmit spectra and thus multi-user interference. Note that our system also
includes the single-user case, i.e., with $N=1$, as a special case.

We focus on dynamic spectrum management through multi-user multi-carrier
transmit power management and optimization. No signal coordination or vectoring
between transmitters or receivers is assumed. Each user 
thus employs a single-user decoder. This case is well in line with many practical settings where a distinct physical location
or a limited communication between transmitters and receivers does not allow for 
signal coordination.  
We follow the common standard interference channel system model where the multi-user interference
is treated as AWG noise. Perfect channel state information is assumed at transmitters and receivers.
The achievable bit rate of user $n$ on tone $k$ is then given as follows
\begin{equation}\label{eq:bitrate}
 b_k^n(\mathbf{s}_k) \triangleq \log_2 \left( 1+ \frac{s_k^n}{{\displaystyle \sum_{m \neq n}} a_k^{n,m} s_k^m +
z_k^n} \right),
\end{equation}
with $\mathbf{s}_k=[s_k^1,\ldots,s_k^N]^T$, $s_k^n$ denoting the transmit power of user $n$ on tone $k$, 
$a_k^{n,m}$ denoting the normalized channel gain from user $m$ to
user $n$ on tone $k$, and $z_k^n$ denoting the normalized received 
noise power for user $n$ on tone $k$.
A signal to noise ratio (SNR) gap \cite{UnderstandingDSL} that characterizes imperfect coding 
and signal modulation, and a noise margin, may be included in the 
normalized channel gains and noise power.

The DSM problem can then be formulated as follows
\begin{equation}\label{eq:smp}
\begin{array}{cl}
{\displaystyle \maximize_{s_k^n, k \in \mathcal{K}, n \in \mathcal{N}}} & 
{\displaystyle \sum_{n \in \mathcal{N}} w_n R^n(\mathbf{s}_k, k \in \mathcal{K})}\\
\mathrm{subject~to} & {\displaystyle P^n(s_k^n, k \in \mathcal{K}) = P^{n,\mathrm{tot}}}, \forall n \in \mathcal{N}\\
& 0 \leq s_k^n \leq s_k^{n,\mathrm{mask}}, \forall n \in \mathcal{N}, \forall k \in \mathcal{K},\\
& \\
\mathrm{with} & \left\{ 
\begin{array}{l}
{\displaystyle R^n(\mathbf{s}_k, k \in \mathcal{K}) \triangleq  \sum_{k \in \mathcal{K}}
b_k^n(\mathbf{s}_k)}\\ 
{\displaystyle P^n(s_k^n, k \in \mathcal{K}) \triangleq  \sum_{k \in \mathcal{K}} s_k^n},
\end{array} \right.\\
\end{array} 
\end{equation}
with $R^n(\mathbf{s}_k, k \in \mathcal{K})$ denoting the achievable data rate for user $n$ and its corresponding
weighting $w_n$, $P^n(s_k^n, k \in \mathcal{K})$ denoting the total allocated (transmit) power of user $n$, 
the constant $P^{n,\mathrm{tot}}$ denoting the total power budget for user $n$,
and the constant $s_k^{n,\mathrm{mask}}$ denoting the maximum transmit power 
(spectral mask) of user $n$ on tone $k$. This corresponds to a maximization of the sum of 
the weighted achievable data rates (with multiple tones), under per-user
total power constraints and per-tone spectral masks. 

The transmit spectrum of a user refers to the user's transmit
power on all the tones. These transmit spectra are the optimization variables for
the DSM problem.

We want to highlight that the per-user total power constraints are expressed 
as equality constraints. This is further extended to inequality constraints
in Section~\ref{sec:equalitymasks}.

\section{Real-Time Dynamic Spectrum Management}\label{sec:realtime}

\begin{figure*}
\begin{equation}\label{eq:rtsmp}
\begin{array}{cl}
{\displaystyle \maximize_{t_k^n, k \in \mathcal{K}, n \in \mathcal{N}}} & 
{\displaystyle \sum_{n \in \mathcal{N}} w_n \sum_{k \in \mathcal{K}}
\log_2 \left( 1+ \frac{{\displaystyle \sum_{j \in \mathcal{K}} \beta_k^n(j) t_j^n + P^{n,\mathrm{tot}} \gamma_k^n}}
{{\displaystyle \sum_{m \neq n} a_k^{n,m} \left( \sum_{j \in \mathcal{K}} \beta_k^m(j) t_j^m + P^{m,\mathrm{tot}} \gamma_k^m\right) +
z_k^n}} \right)}\\
\mathrm{subject~to} & {\displaystyle 0 \leq \sum_{j \in \mathcal{K}} \beta_k^n(j) t_j^n+P^{n,\mathrm{tot}} \gamma_k^n \leq s_k^{n,\mathrm{mask}}, \forall n \in \mathcal{N}, \forall k \in \mathcal{K}}
\end{array} 
\end{equation}
\end{figure*}

Real-time computation is an important 
challenge in practice where computation time and compute power 
of communication devices and systems are limited.
In this section, we present a new paradigm and theory for \emph{real-time dynamic spectrum management} (RT-DSM).
We first introduce our definition of a RT-DSM algorithm
in Section~\ref{sec:RTdefinition}. To enable the design of RT-DSM algorithms, we then
propose a novel transformation, also referred to as a difference-of-variables transformation, in Section~\ref{sec:dvtransformation}. Using this transformation, we reformulate
the DSM problem in terms of power difference variables. This
allows for the design of a first RT-DSM algorithm in Section~\ref{sec:idb},
referred to as \emph{iterative power difference balancing}. 
This is further extended with a procedure for dealing 
with inequality per-user total power constraints in Section~\ref{sec:equalitymasks}, and
an equalization procedure to tackle non-smooth solution behaviour in Section~\ref{sec:equalization}.

\subsection{Definition of Real-Time Dynamic Spectrum Management Algorithm}\label{sec:RTdefinition}

To provide a concrete label and definition of the algorithms targeted in 
this paper, we introduce the following definition:
\begin{framed}
\hspace{-0.1cm}\begin{definition} 
 \textbf{[Real-time dynamic spectrum management (RT-DSM) algorithm]} A real-time dynamic spectrum management algorithm sequentially updates the 
 transmit power variables such that these satisfy all constraints after each 
 update.
\end{definition}\label{def:rtso}
\end{framed}

This definition implies that RT-DSM algorithms can be stopped after 
each update (even after a single update of any transmit power variable), 
and as such they are suitable for execution under very tight computation
time and compute power budgets, as they can be stopped whenever one of both 
resources is exhausted. This guarantees fast responsiveness, and allows
for real-time operation. 

\subsection{Difference-of-Variables (DoV) Transformation and Optimization}\label{sec:dvtransformation}

The original DSM problem (\ref{eq:smp}) consists of a
separable objective function and coupled per-user total power constraints. An 
important step towards the design of RT-DSM algorithms is to eliminate the
per-user total power constraints. To enable this we propose to replace the 
primal variables $s_k^n~\forall n,k$ with an alternative set of primal variables $t_k^n~\forall n,k$,
where the latter will be referred to as the \emph{power difference variables}.
For this we propose a novel transformation of variables, referred to as the \emph{difference-of-variables
(DoV) transformation}:
\begin{eqnarray}
 & & s_k^n = \sum_{j \in \mathcal{K}} \beta_k^n(j) t_j^n + P^{n,\mathrm{tot}} \gamma_k^n, \quad n \in \mathcal{N}, k \in \mathcal{K} \label{eq:transformation}\\
 & \mathrm{with} & \sum_{k \in \mathcal{K}} \beta_k^n(j) = 0, \quad n \in \mathcal{N}, j \in \mathcal{K}\label{cond:difference}\\
 & & \sum_{k \in \mathcal{K}} \gamma_k^n = 1, \quad n \in \mathcal{N}\label{cond:equalpower}\\
 & & \beta_k^n(k) > 0, \quad n \in \mathcal{N}, k \in \mathcal{K}\label{cond:posdirect}
\end{eqnarray}
where $\beta_k^n(j), \gamma_k^n$ are (fixed) arbitrary constants that can take any
value satisfying constraints (\ref{cond:difference}), (\ref{cond:equalpower}) and (\ref{cond:posdirect}).
We also define the following set $\mathcal{B}_k^n$ for later use
 \begin{eqnarray}
  \mathcal{B}_k^n & = & \left\{j \in \mathcal{K} | \beta_k^n(j) \neq 0 \right\},
 \end{eqnarray}
which denotes the set of tones for user $n$ and tone $k$ with power difference 
variables that influence $s_k^n$. 

Using the DoV transformation (\ref{eq:transformation}), we obtain a reformulation of 
(\ref{eq:smp}) as given in the following theorem:
\begin{theorem}
 Applying a DoV transformation (\ref{eq:transformation}), satisfying 
 constraints (\ref{cond:difference}) (\ref{cond:equalpower}) (\ref{cond:posdirect}),
 to the DSM problem (\ref{eq:smp}) results in the 
 reformulated problem (\ref{eq:rtsmp}).
\end{theorem}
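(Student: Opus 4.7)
My plan is to verify the theorem by a direct substitution argument, checking each ingredient of problem (\ref{eq:smp}) under the DoV transformation and confirming it matches the corresponding ingredient of (\ref{eq:rtsmp}). There are three things to track: the objective, the spectral mask constraints, and the per-user total power equality constraints.

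For the objective, I would simply plug the expression $s_k^n = \sum_{j \in \mathcal{K}} \beta_k^n(j) t_j^n + P^{n,\mathrm{tot}} \gamma_k^n$ into the definition of $b_k^n(\mathbf{s}_k)$ in (\ref{eq:bitrate}), and then into $R^n$. Since $b_k^n$ depends on $s_k^n$ (signal) and on the $s_k^m$ for $m \neq n$ (interferers), substituting for each of these using the DoV expression yields exactly the logarithmic term appearing under the double sum in (\ref{eq:rtsmp}). For the box constraints $0 \leq s_k^n \leq s_k^{n,\mathrm{mask}}$, the same substitution yields the inequalities written in (\ref{eq:rtsmp}).

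The substantive step is showing that the per-user total power equality constraints $\sum_k s_k^n = P^{n,\mathrm{tot}}$ are automatically satisfied for every admissible $t$, so that they can be dropped from the formulation. I would compute, for each user $n$,
\begin{equation*}
\sum_{k \in \mathcal{K}} s_k^n = \sum_{k \in \mathcal{K}} \sum_{j \in \mathcal{K}} \beta_k^n(j) t_j^n + P^{n,\mathrm{tot}} \sum_{k \in \mathcal{K}} \gamma_k^n,
\end{equation*}
swap the order of summation in the first term to get $\sum_{j \in \mathcal{K}} t_j^n \bigl( \sum_{k \in \mathcal{K}} \beta_k^n(j) \bigr)$, then invoke condition (\ref{cond:difference}) to kill the inner sum and condition (\ref{cond:equalpower}) to collapse the second term to $P^{n,\mathrm{tot}}$. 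Thus $\sum_k s_k^n = P^{n,\mathrm{tot}}$ holds identically in $t$, and the equality constraints may be removed, which is precisely what distinguishes (\ref{eq:rtsmp}) from (\ref{eq:smp}).

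There is no real obstacle here; the theorem is essentially a bookkeeping identity and the main conceptual point is recognizing that conditions (\ref{cond:difference}) and (\ref{cond:equalpower}) are exactly engineered to make the total power constraint disappear under the transformation. Condition (\ref{cond:posdirect}) is not strictly required for the equivalence stated in the theorem but will matter later (for the coordinate ascent line search in IPDB), so I would mention it only as auxiliary structure rather than as something invoked in the proof.
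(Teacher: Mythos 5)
Your proposal is correct and follows essentially the same route as the paper's own proof: direct substitution of the DoV expression into the objective and per-tone mask constraints, followed by the interchange of summation order and the use of conditions (\ref{cond:difference}) and (\ref{cond:equalpower}) to show that $\sum_{k} s_k^n = P^{n,\mathrm{tot}}$ holds identically in $t$, so the per-user total power constraints can be dropped. Your side remark that (\ref{cond:posdirect}) is not actually invoked in this argument is accurate and consistent with the paper, which likewise never uses it in the proof.
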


 \begin{proof}
 The objective and constraints of (\ref{eq:rtsmp}) can be obtained by applying the DoV transformation
 to the objective and per-tone spectral mask constraints of (\ref{eq:smp}).
 The per-user total power constraints of (\ref{eq:smp}) are not present anymore in the reformulation (\ref{eq:rtsmp}).
 This is because the proposed DoV transformation (\ref{eq:transformation})
 ensures that these constraints are satisfied for all values of the power difference
 variables $t_k^n$. This can straightforwardly be proven as follows:
 \begin{displaymath}
  \begin{array}{cl}
   {\displaystyle \sum_{k \in \mathcal{K}} s_k^n} & = {\displaystyle \sum_{k \in \mathcal{K}} \left( \sum_{j \in \mathcal{K}} \beta_k^n(j) t_j^n + P^{n,\mathrm{tot}} \gamma_k^n \right)} \\
   & = {\displaystyle\left( \sum_{j \in \mathcal{K}} t_j^n \underbrace{\sum_{k \in \mathcal{K}} \beta_k^n(j)}_{=0} \right) + \left( P^{n,\mathrm{tot}} 
   \underbrace{\sum_{k \in \mathcal{K}} \gamma_k^n}_{=1} \right)}\\
   & = P^{n,\mathrm{tot}}
  \end{array}
 \end{displaymath} 
 \end{proof}
 
 \begin{figure*}
    \begin{equation}\label{eq:1Drtsmp}
  \begin{array}{cl}
     {\displaystyle \maximize_{t_k^n}} & f_k^n(t_k^n)\\
 \mathrm{subject~to} & {\displaystyle t_k^{n,\mathrm{min}} \leq t_k^n \leq t_k^{n,\mathrm{max}}}\\
  \mathrm{with} & \left\{ \begin{array}{ll} f_k^n(t_k^n) = {\displaystyle \sum_{k \in \mathcal{A}_k^n} \sum_{n=1}^N w_n 
     \log_2 \left( 1 + \frac{{\displaystyle \sum_{j \in \mathcal{B}_k^n} \beta_k^n(j) t_j^n + P^{n,\mathrm{tot}} \gamma_k^n}}{{\displaystyle \sum_{m \neq n} a_k^{n,m} \left( \sum_{j \in \mathcal{B}_k^m} \beta_k^m(j) t_j^m + P^{m,\mathrm{tot}} \gamma_k^m\right) +
z_k^n}} \right)}\\ 
  x = \max \{ -\frac{1}{\beta_q^n(k)} \left( \sum_{j \in \mathcal{B}_q^n \backslash \{k\}} \beta_q^n(j) t_j^n+P^{n,\mathrm{tot}}\gamma_q^n \right), 
   \forall q \in \mathcal{A}_k^n ~ \& ~ \beta_q^n(k) > 0 \}\\
   y = \max \{ \frac{1}{\beta_q^n(k)} \left(s_q^{n,\mathrm{mask}} - \left(\sum_{j \in \mathcal{B}_q^n \backslash \{k\}} \beta_q^n(j) t_j^n+P^{n,\mathrm{tot}} \gamma_q^n\right) \right), 
   \forall q \in \mathcal{A}_k^n ~ \& ~ \beta_q^n(k) < 0  \}\\
   t_k^{n,\mathrm{min}}= \max \{x,y\}\\
   u = \min \{ \frac{1}{\beta_q^n(k)} \left(s_q^{n,\mathrm{mask}} - \left( \sum_{j \in \mathcal{B}_q^n \backslash \{k\}} \beta_q^n(j) t_j^n+P^{n,\mathrm{tot}} \gamma_q^n \right) \right), 
   \forall q \in \mathcal{A}_k^n ~ \& ~ \beta_q^n(k) > 0  \}\\
   v = \min \{ -\frac{1}{\beta_q^n(k)} \left( \sum_{j \in \mathcal{B}_q^n \backslash \{k\}} \beta_q^n(j) t_j^n+P^{n,\mathrm{tot}} \gamma_q^n \right), 
   \forall q \in \mathcal{A}_k^n ~ \& ~ \beta_q^n(k) < 0  \}\\
   t_k^{n,\mathrm{max}} = \min \{ u,v\} \end{array} \right.\\
  \end{array}
  \end{equation}
 \end{figure*}
 
 The strength of reformulation (\ref{eq:rtsmp}) is that the coupled per-user
 total power constraints are eliminated, and that the reformulation is 
 expressed in terms of the power difference
 variables $t_k^n$. Because of the constraint (\ref{cond:difference}), 
 each power difference variable $t_k^n$ adds some transmit power to some tones 
 but subtracts the same amount of transmit power from other tones, resulting in a zero total power change
 operation. This is also the reason why $t_k^n$ is referred to as a power difference variable.
 The above properties of the reformulated DSM problem (\ref{eq:rtsmp})
 are crucial to enable the design of RT-DSM algorithms as will be shown in Section~\ref{sec:idb}.
 
 Reformulation (\ref{eq:rtsmp}) displays coupling in both the objective as well as the constraints. However,
 the coupling can be of much smaller size (compared to (\ref{eq:smp})) in the sense that it couples only a subset of all tones. 
 More specifically, the bit loading $b_k^n$ on a given tone and
 the spectral mask constraints are impacted by a number of power difference variables equal to the cardinality
 of $\mathcal{B}_k^n$. In contrast in (\ref{eq:smp}), 
 $s_k^n$ impacts all terms because of the coupled per-user total power constraints.
 
 Two example transformations that are valid DoV transformations, i.e., satisfying (\ref{cond:difference}) (\ref{cond:equalpower}) (\ref{cond:posdirect}), are as follows:
 \begin{enumerate}
  \item Two-tone DoV transformation: \begin{equation}\label{eq:2toneDoV}
  s_k^n = \begin{cases} t_k^n - t_{k-1}^n + P^{n,\mathrm{tot}} \gamma_k^n ~ ,k>1 \\ 
		t_1^n - t_{K}^n + P^{n,\mathrm{tot}} \gamma_1^n ~~~ ,k=1 \ \end{cases}
 \end{equation}
  \item Three-tone DoV transformation: \begin{equation}
  s_k^n = \begin{cases} - t_{k+1}^n + 2 t_k^n - t_{k-1}^n + P^{n,\mathrm{tot}} \gamma_k^n~ ,k>1 ~ \& ~ k<N \\ 
		- t_{2}^n + 2 t_1^n - t_{K}^n + P^{n,\mathrm{tot}} \gamma_1^n~~~ ,k=1 \\
		- t_{1}^n + 2 t_K^n - t_{K-1}^n + P^{n,\mathrm{tot}} \gamma_K^n~~~ ,k=N\end{cases}
 \end{equation}
 \end{enumerate}
 
 The two-tone DoV transformation has a coupling over two consecutive tones, 
 i.e., $\mathrm{card}(\mathcal{B}_k^n)=2$, whereas for the three-tone DoV
 transformation $\mathrm{card}(\mathcal{B}_k^n)=3$.
 
 \subsection{Iterative Power Difference Balancing}\label{sec:idb}
 
 Our RT-DSM algorithm design starts from the proposed reformulated problem (\ref{eq:rtsmp}).
 As the DSM problem corresponds to an NP-hard nonconvex
 problem \cite{DSMluo}, we propose an iterative coordinate ascent approach to tackle it, 
 which we refer to as \emph{iterative power difference balancing}.
 More specifically, it consists in sequentially updating/optimizing 
 one power difference variable at a time. The corresponding one-dimensional optimization
 problem is given in (\ref{eq:1Drtsmp}), where the optimization variable
 is the power difference variable $t_k^n$. 
 
 To identify the coupling level, we define a set $\mathcal{A}_k^n$ as follows,
 \begin{displaymath}
 \mathcal{A}_k^n = \left\{j \in \mathcal{K}| \beta_j^n(k) \neq 0 \right\}
 \end{displaymath}
 which denotes the set of tones for user $n$ and tone $k$ with transmit powers 
 that are influenced by power difference variable $t_k^n$.
 The objective function in (\ref{eq:1Drtsmp}) is 
 coupled over multiple tones, depending on the cardinality of $\mathcal{A}_k^n$. 
 However a proper choice of the DoV transformation results in a small coupling level, 
 reducing the sum to only a few terms.
 For instance, for the two-tone DoV transformation (\ref{eq:2toneDoV}), this corresponds
 to two terms, which means that we only consider two tones in the objective function
 and the constraints. The constraints correspond to plain bound constraints where
 the bounds $t_k^{n,\mathrm{min}}$ and $t_k^{n,\mathrm{max}}$ are simple constants
 that depend on the other power difference variables which are kept constant in the considered iteration.
 By updating the power difference variables one at a time, the total power $P^n$
 is not changed because of the zero per-user total power change property (\ref{cond:difference}).
 Each update results in an improved
 objective function value though, guaranteeing a monotonously improving performance.
 We want to highlight that, in contrast to typical existing DSM algorithms, IPDB solves the 
 problem in the \emph{primal domain} instead of the dual domain, avoiding 
 all issues related to a possible non-zero duality gap. 
 
 The one-dimensional problem (\ref{eq:1Drtsmp}) however still corresponds to a nonconvex
 problem and therefore we propose to solve it with a plain one-dimensional (1D)
 exhaustive search, where the interval $[t_k^{n,\mathrm{min}},t_k^{n,\mathrm{max}}]$
 is discretized in small steps. This can be seen as an exact line search based on a 1D
 grid search. Note that iterative grid-based exhaustive one-dimensional
 searches have been shown to be very promising in DSM literature,
 such as for the iterative spectrum balancing (ISB) algorithm \cite{ISB_WeiYu,ISB_Raphael}. We emphasize however that these existing algorithms
 focus on dual solutions where the discretization is applied to the primal
 variables, which are transmit powers. In our case, we focus on a primal solution where we consider power difference
 variables. As a result, we claim that we can make the discretization coarser, 
 because power difference variables focus on differences between tones.
 Taking into account the fact that the channels (direct as well as crosstalk
 channels) vary over tones with some degree of smoothness, the optimal transmit spectra do not differ significantly
 from one tone to the next, a property that has also been referred to as spectral correlation \cite{Tsiaflakis2012}.
 Therefore we propose to use a fine discretization for small difference 
 values and a coarse discretization for large differences.
 More specifically we choose a logarithmicly scaled discretization.
 To obtain this we define the following sets
 \begin{equation}
 \begin{array}{lcl}
 \mathcal{F} & = & \big\{ x| 10 \log_{10} (x) = -140+k \delta, k \in \mathbb{Z} \big\}\\
 \mathcal{J}^+ & = & F \cap [t_k^{n,\mathrm{min}}, t_k^{n,\mathrm{max}}]\\
 \mathcal{J} & = & \mathcal{J}^+ \cup \{0\} \cup (-\mathcal{J}^+),
 \end{array}
 \end{equation}
 where $\delta$ is a discretization variable (referred to as granularity). 
 In the case of dual algorithms
 such as ISB, $\delta=0.5$ dBm/Hz is typically chosen. However for IPDB, we
 show in Section~\ref{sec:gran} that a coarser granularity can be chosen (e.g., 1dBm/Hz,) without 
 significantly impacting the final accuracy, which then reduces computational complexity significantly.
 We note that the zero element is included in the set $\mathcal{J}$ to 
 maintain monotonicity. 
 
 The resulting grid-based search approach for the 1D problem corresponds to problem 
 (\ref{eq:1Drtsmpdiscr}), where the feasible space consists of set $\mathcal{J}$.
 \begin{equation}\label{eq:1Drtsmpdiscr}
  \maximize_{t_k^n \in \mathcal{J}} f_k^n(t_k^n)
 \end{equation}
  
 The full IPDB algorithm is given in Algorithm~\ref{algo:rtso}. In line~1, the
 power difference variables and the granularity $\delta$ are initialized. In line 2, the constants $\gamma_k^n$
 are initialized satisfying two different constraints. A straightforward
 choice here is $\gamma_k^n=1/K$, which corresponds to an
 equal power allocation over all tones, i.e., $s_k^n=P^{n,\mathrm{tot}}/K$,
 which typically satisfies all power constraints in (\ref{eq:smp}) and (\ref{eq:rtsmp}).
 The repeat loop in line 3 is a loop that can go until some stopping criterion is
 achieved or until some real-time deadline is reached.
 The loop in line~4 is the per-user loop. Note that the user order is not defined
 and can be arbitrarily chosen. In fact this user order can also have multiple instances
 of the same user. Line 5 is the inner per-user iteration with a maximum of
 $I$ iterations. Line 6 is the per-tone loop. Again, the tone
 order is not necessarily consecutive but can be arbitrary. Line~7 is the
 only line that involves an update of the transmit powers and corresponds to a 
 one-dimensional power difference variable update by a 1D exhaustive grid-based search
 of problem (\ref{eq:1Drtsmpdiscr}) over the values $\mathcal{J}$.
 With the DoV transformation (\ref{eq:transformation})
 the corresponding updated transmit powers can be obtained.  In line 9
 the power difference variables are then reset and the constants $\gamma_k^n$ are updated
 so as to keep the transmit powers fixed. Although the latter two
 actions are not necessary from a theoretical point of view, they are seen to improve
 the performance, as the values around $t_k^n=0$ are discretized at a finer
 granularity. This can be seen as a \emph{recentering} operation
 so as to fully benefit from the logarithmicly scaled discretization.
 As a result a fine granularity in transmit powers can be obtained through
 a sum of coarse power difference variables (that are not coarse everywhere). 
 This results in a good final solution accuracy as demonstrated in 
 Section~\ref{sec:simulations}. Lines~12 and 13 correspond to an inequality procedure to consider
 inequalities and an equalization procedure, as discussed in 
 Section~\ref{sec:equalitymasks} and \ref{sec:equalization}, respectively.
 Note however that these steps are not necessary and can be disregarded at this point.
  
 We now analyze different aspects and properties of the IPDB algorithm:
 
 \subsubsection{Tunability}
 We want to highlight that the IPDB algorithm offers flexibility 
 in choosing the user order, the tone order, the number of inner iterations,
 the granularity $\delta$, and in the initialization of the parameters $\gamma_k^n$. Different such choices
 are evaluated in Section~\ref{sec:configurations}. 
 
 \subsubsection{Real-time Property}
 The IPDB algorithm satisfies the real-time property from Definition~\ref{def:rtso}: it can be stopped 
 at any moment as it satisfies the constraints after every single update
 of the power difference variables, which have a one-to-one mapping to the
 transmit powers through (\ref{eq:transformation}). The concrete improved real-time behaviour is
 demonstrated in Section~\ref{sec:realtimeproperty}.
 
  \begin{algorithm}
    \caption{Iterative Power Difference Balancing}\label{algo:rtso}
    \begin{algorithmic}[1]
    \State Initialize $\delta, t_k^n \gets 0, \forall n, \forall k$
    \State Initialize $\gamma_k^n, \forall n, \forall k$ satisfying (\ref{cond:equalpower}) and $0\leq \gamma_k^n \leq \frac{s_k^{n,\mathrm{mask}}}{P^{n,\mathrm{tot}}}$
    \Repeat
    \For{ $n \gets$ userOrder}  
     \For { $i \gets 1, I$}     
      \For{ $k \gets$ toneOrder} 
       \State $t_k^n \gets$ Solve (\ref{eq:1Drtsmpdiscr}); $s_k^n \gets (\ref{eq:transformation}), \forall k \in \mathcal{A}_k^n$
      \EndFor
      \State $t_k^n \gets 0, \gamma_k^n \gets s_k^n/P^{n,\mathrm{tot}}, \forall k$
     \EndFor
    \EndFor
    \State Inequality procedure Algorithm~\ref{algo:rtso_inequal}
    \State Equalization procedure Algorithm~\ref{algo:rtso_equal}
    \Until{convergence stop criterion}
    \end{algorithmic}
    \end{algorithm}
    
  \subsubsection{Complexity}
  The computational complexity analysis of IPDB is rather straightforward. 
  Most of the complexity results from line 7, which corresponds to a simple 1D exhaustive grid-based search. 
  Under a given computational and compute power budget it is easy to determine 
  the number of updates that can be performed, which demonstrates the benefit of the real-time property of IPDB.
  
  \subsubsection{Monotonicity}
  Each update results in a non-decreasing feasible objective function value. As 
  a result IPDB has an interesting monotonicity and scalability property where
  more computation time or compute power consistently results in a better obtained solution.
  
  \subsubsection{Convergence}
  As the IPDB algorithm is a coordinate search method, the convergence behaviour 
  is inherited from such methods. The looser the coupling between the 
  coordinate ascent variables, the faster the convergence \cite{nocedal}.
  However, it is important to highlight that because of the real-time property
  that ensures constraint satisfaction after each single update, and the
  monotonicity property that ensures a non-decreasing objective function value,
  it is not extremely important that full convergence is reached when performing the IPDB algorithm.
  Fast numerical convergence results (up to 99\% and 99.9\% of full performance convergence) 
  are demonstrated in Section~\ref{sec:configurations}.\\
  
  Finally, we want to highlight that although we employ a DoV transformation
  with differences between the transmit powers on different tones for one user, one could 
  in principle also employ differences between the transmit powers of different users on a single tone or different
  tones if there are per-tone sum power constraints or total network sum power constraints, respectively.
    
  \subsection{Inequality constraints}\label{sec:equalitymasks}
  
  In this section we consider inequality constraints for the per-user total power
  constraints as given by the following DSM problem
  
  \begin{equation}\label{eq:smpinequality}
    \begin{array}{cl}
    {\displaystyle \maximize_{s_k^n, k \in \mathcal{K}, n \in \mathcal{N}}} & 
    {\displaystyle \sum_{n \in \mathcal{N}} w_n R^n(\mathbf{s}_k, k \in \mathcal{K})}\\
    \mathrm{subject~to} & {\displaystyle P^n(s_k^n, k \in \mathcal{K}) \leq P^{n,\mathrm{tot}}}, \forall n \in \mathcal{N}\\
    & 0 \leq s_k^n \leq s_k^{n,\mathrm{mask}}, \forall n \in \mathcal{N}, \forall k \in \mathcal{K}.\\
    \end{array} 
  \end{equation}
  
  To extend IPDB to also cover inequality constraints, we propose
  an inequality procedure that allows to reduce the per-user total powers below
  $P^{n,\mathrm{tot}}$, whenever this improves the weighted sum of achievable data rates.
  This procedure is given in Algorithm~{\ref{algo:rtso_inequal}}. 
  We have $\alpha > 1$ and $0 < \beta < 1$. $s_\alpha$ computes a value larger
  than the current value $s_k^n$ while satisfying the per-user total power constraint
  as well as the spectral mask constraint. $s_\beta$ computes a value smaller than 
  the current value $s_k^n$. In line 4, a per-tone weighted sum of bit rates evaluation 
  for user $n$ is performed to check if the weighted sum of bit rates
  can be increased by increasing or decreasing the transmit power $s_k^n$.
  We note that $\mathbf{s}_k|_{s_k^n=\hat{s}_k^n}$ equals $\mathbf{s}_k$ with $s_k^n$ being replaced by $\hat{s}_k^n$.
  This is repeated for all tones. This inequality procedure does not violate the real-time property as 
  the constraints of (\ref{eq:smpinequality}) are satisfied after every single update.
  Also the monotonicity property is not violated as each update results in 
  a non-decreasing feasible objective function value.
  
    \begin{algorithm}
    \caption{Inequality procedure user $n$}\label{algo:rtso_inequal}
    \begin{algorithmic}[1]
     \For{ $k \gets$ toneOrder} 
      \State ${\displaystyle s_\alpha \gets \min(\alpha s_k^n,P^{n,\mathrm{tot}}-\sum_{q \in \mathcal{K}\backslash k} s_q^n, s_k^{n,\mathrm{mask}})}$
      \State $s_\beta \gets \beta s_k^n$
      \State $s_k^n \gets {\displaystyle \argmax_{\hat{s}_k^n \in \{s_\alpha, s_k^n, s_\beta \}} \sum_{m \in \mathcal{N}} w_m b_k^m(\mathbf{s}_k|_{s_k^n=\hat{s}_k^n})}$
     \EndFor
    \end{algorithmic}
    \end{algorithm}
    
 \subsection{Randomization and Equalization}\label{sec:equalization}
 
 As mentioned in Section~\ref{sec:idb}, the IPDB algorithm is tunable in
 the tone order, the user order, and in the initial choice of $\gamma_k^n$, where the latter has a one-to-one mapping with
 the initial transmit powers if the power difference variables $t_k^n$ are fixed.
 We can use randomized values, i.e., a random tone order, random user order 
 and random initial transmit powers. Randomization
 in iteration orders and initial conditions has been shown to be effective in
 several cases in literature \cite{Nesterov2013}. It is shown in 
 Section~\ref{sec:simulations} that randomization indeed results
 in performance gains. 
 
 However, randomization also has some side effects 
 which may not always be desirable. For instance,
 when randomizing the initial transmit powers, the
 resulting transmit spectra may have a very non-smooth behaviour in the sense that
 transmit powers differ significantly from one tone to the next. For instance, 
 in Figure~\ref{fig:isbRndSpectra} the resulting transmit spectra are
 shown when applying IPDB to a 2-user scenario (corresponding to the blue and
 the green curve) with randomized initial transmit spectra. The transmit
 spectra between tones 45 and 115 display significant jumps. 
 The reason is that by starting from very different initial transmit powers on consecutive tones 
 IPDB can converge to very different per-tone solutions in consecutive tones. We want to highlight that the non-smooth behaviour results in
 a typically better overall performance and prevents convergence to
 a very poor solution for which a poor choice is systematically made
 over multiple consecutive tones.
 
 \begin{figure}[!t]
  \centering
  \includegraphics[width=0.95\columnwidth]{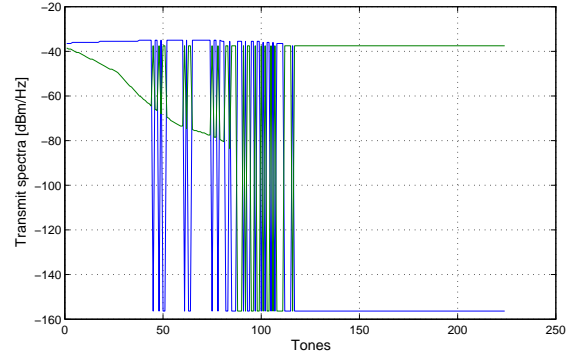}
  \caption{Resulting transmit powers for 2-user ADSL scenario of Figure~\ref{fig:2adsl} when using IPDB with randomized initial transmit spectra and without equalization. The transmit spectra
  between tones 45 and 115 display significant jumps.}
  \label{fig:isbRndSpectra}
 \end{figure}
 
 However, such non-smooth solution behaviour is not always desirable in practice
 and therefore we propose an equalization procedure that can smooth it out. 
 This procedure is given in Algorithm~\ref{algo:rtso_equal}
 and consists of one simple \emph{for} loop over the tones. In each loop, 
 it is checked if a spike can be detected between three (almost) consecutive tones $k$, $k+1$ and $k+3$. If a down spike is detected (with more than 10~dB difference), 
 this is averaged out. If an up spike is detected (with more than 10~dB difference) this is flattened out.
 The reason $k+3$ instead of $k+2$ is considered for the third tone is that 
 this way one-tone wide spikes as well as two-tone wide spikes can be detected.
 Although the above procedure is a very low-complexity operation, it has very good
 equalization performance with smooth resulting transmit spectra, as 
 demonstrated in Section~\ref{sec:simulations}.
 From an optimization point of view, the equalization procedure can be seen as
 a procedure that allows to obtain overall solutions where similar per-tone solutions 
 are chosen in neighboring tones.
 
 We want to highlight that this equalization procedure does not
 have to be called in all iterations but only after every so many outer iterations.
 
 The equalization procedure can slightly violate the real-time property
 during execution of line 10 and lines 12 to 14, but we want to highlight
 that these steps are of much smaller granularity than the main computational
 step of IPDB, i.e. line 7 of Algorithm~\ref{algo:rtso}.
 
 Although the monotonicity 
 property may be violated whenever the equalization procedure is called, overall
 this seems to provide a better performance
 as demonstrated in Section~\ref{sec:simulations}.
 
 \begin{algorithm}
    \caption{Equalization procedure user $n$}\label{algo:rtso_equal}
    \begin{algorithmic}[1]
    \State $p \gets \sum_{k \in \mathcal{K}} s_k^n$
     \For{ $k \gets 1\ldots K-3$}
      \State $u_l \gets 10 \log_{10}(s_{k}^n)-10$
      \State $u_r \gets 10 \log_{10}(s_{k+3}^n)-10$
      \State $s_{\mathrm{db}} \gets 10 \log_{10}(s_{k+1}^n)$
      \State $d_l \gets 10 \log_{10}(s_{k}^n)+10$
      \State $d_r \gets 10 \log_{10}(s_{k+3}^n)+10$
      \If{$s_{\mathrm{db}} < u_l$ and $s_{\mathrm{db}} < u_r$}
	\State $s_m \gets (s_k^n + s_{k+1}^n + s_{k+3}^n)/3$
	\State $s_k^n \gets s_m, s_{k+1}^n \gets s_m, s_{k+3}^n \gets s_m$
      \ElsIf{ $s_{\mathrm{db}} > d_l$ and $s_{\mathrm{db}} > d_r$}
 	\State $s_{k+1}^n \gets \min(s_k^n,s_{k+3}^n)$
 	\State $p_r \gets \sum_{k \in \mathcal{K}} s_k^n$
 	\State $s_k^n \gets s_k^n ~ p / p_r, k \in \mathcal{K}$
      \EndIf
     \EndFor
    \end{algorithmic}
    \end{algorithm}

\section{Simulations and Performance Analysis}\label{sec:simulations}

In this section the performance of the IPDB algorithm is evaluated for 
different settings and for different performance
metrics. The performance will be compared with that of the popular ISB
algorithm which is also a coordinate ascent grid-based search algorithm. In contrast to IPDB which 
operates in the primal domain, ISB is based on a combination of a dual decomposition
approach with a discrete per-tone coordinate ascent grid-based search. ISB
does not have the real-time property and is thus not an RT-DSM algorithm.
Wireline DSL as well as wireless LTE settings will be considered. 

For our wireline DSL simulations in Sections~\ref{sec:configurations} \ref{sec:gran} \ref{sec:inequalsim}, 
we use realistic DSL simulators, which have been validated in practice and 
are aligned with standards. We consider 24AWG twisted copper pair
lines. The maximum transmit power is 20.4~dBm for the ADSL and ADSL2+ scenarios, and 11.5~dBm
for the VDSL scenarios. The SNR gap is chosen at 12.9~dB for the DSL scenarios, corresponding
to a coding gain of 3~dB, a noise margin of 6~dB, and a target symbol error 
probability of $10^{-7}$. The tone spacing is 4.3125~kHz. The DMT symbol rate is 4~kHz. The weights
$w_n$ are chosen equal for all users ($n=1\ldots N$), namely $w_n=1/N$, unless specified otherwise.
When the equalization procedure is activated in Algorithm~\ref{algo:rtso}, it
is only performed after each fifth outer iteration. 

A concrete wireless LTE heterogeneous network setting
will be discussed in Section~\ref{sec:lte}.

\subsection{IPDB performance: ADSL Case}\label{sec:configurations}

The ADSL scenario under consideration is the
near-far scenario shown in Figure~\ref{fig:2adsl}, i.e., a 2-user
downstream scenario with one far-user connected to the central office (CO) and
a second near-user connected to a remote terminal (RT). This near-far type of scenarios is
quite common in practice and has received a lot of attention in DSL literature. 
The underlying optimization problem is known to display a very nonconvex behaviour 
for which locally optimal DSM methods can perform poorly \cite{dsb}.
In our simulations, the RT-connected user is given a weight of 0.1 and 
the CO-connected user a weight of 0.9, to prevent that the latter is being
allocated a too small achievable data rate.

 \begin{figure}[!t]
  \centering
  \includegraphics[width=0.675\columnwidth]{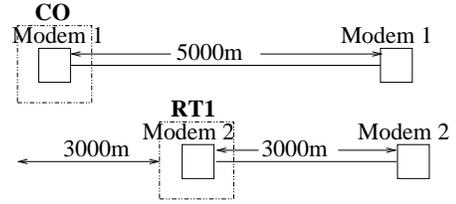}
  \caption{Near-far 2-user ADSL downstream scenario with central office (CO) and remote terminal (RT)}
  \label{fig:2adsl}
 \end{figure}

The performance of both IPDB and ISB are evaluated
for different configurations exploiting the tunability of IPDB. 
The tone order corresponds to one configuration setting for which we test four choices. 
TO 1 and TO 2 correspond to tone orders $[1:K]$ and $[K:-1:1]$, respectively. TO 3 selects
TO 1 or TO 2 with probability of 50\% each, each time line 6 of IPDB is entered. 
TO 4 corresponds to a tone order which is a fully random permutation 
of the tone set $[1:K]$, and which changes each time line 6 of IPDB is entered.

A second configuration setting corresponds to the initial transmit power spectra. For this
we consider two choices: (1) EP corresponds to an equal power allocation, i.e., 
$\gamma_k^n=1/K~\forall k,n$, (2) RP corresponds to a random power allocation
with uniformly distributed probabilities in dB scale while satisfying
(\ref{cond:equalpower}) and $0\leq \gamma_k^n \leq s_k^{n,\mathrm{mask}}/P^{n,\mathrm{tot}}$.
The number of inner iterations $I$ is fixed at 1.

We consider IPDB with and without equalization (Algorithm~\ref{algo:rtso_equal}),
corresponding to EQ ON and EQ OFF, respectively.
Three different DoV transformations are evaluated, which correspond to
(\ref{eq:2toneDoV}), (\ref{eq:2toneDoVrand}) and (\ref{eq:3tone2}), respectively, and
where $\pi$ is a vector that corresponds to a random permutation of vector
$[1,\ldots,K]$:

\begin{enumerate}
  \item Two-tone rand DoV transformation: \begin{equation}\label{eq:2toneDoVrand}
  s_k^n = t_k^n - t^n_{\pi(k)} + P^{n,\mathrm{tot}} \gamma_k^n
 \end{equation}
  \item Three-tone 2 DoV transformation: \begin{equation}\label{eq:3tone2}
  s_k^n = \begin{cases} 2 t_k^n - t_{k+1}^n - t_{k+2}^n + P^{n,\mathrm{tot}} \gamma_k^n~~~ ,k<N-2 \\ 
		2 t_{N-1}^n - t_{N}^n - t_{1}^n  + P^{n,\mathrm{tot}} \gamma_{N-1}^n~~~~ ,k=N-1 \\
		2 t_{N}^n - t_{1}^n - t_{2}^n  + P^{n,\mathrm{tot}} \gamma_N^n~~~~~~~~ ,k=N\end{cases}
 \end{equation}
 \end{enumerate}

We also consider different granularities for the discrete searches,
where we consider the standard 0.5~dBm/Hz \cite{OSB, ISB_Raphael} for the transmit powers $s_k^n$ in ISB, and
a coarser granularity of 1~dBm/Hz and 10~dBm/Hz for the power difference variables $t_k^n$ in IPDB.

Finally, for all configurations, the results are averaged over 15 different runs
to obtain an averaged performance.

\subsubsection{Weighted Sum of Achievable Data Rates Performance}

In Table~\ref{tab:datarate} the weighted sum of achievable data rates performance is compared for IPDB
and ISB under the different configurations.

The standard ISB configuration corresponds to the settings TO 1, EQ OFF, EP
and 0.5~dBm/Hz granularity. For this standard setting, ISB has a weighted
sum of achievable data rates performance of 1.5549~Mbps. This increases spectacularly
when applying randomized initial transmit powers, i.e., RP with up to 1.8274~Mbps. In addition,
applying the equalization procedure (EQ ON) further improves the performance
up to 1.8799~Mbps, when combined with RP. This is quite surprising as the equalization procedure not only provides 
smoother resulting transmit spectra (as shown in Section~\ref{sec:equalperf}) but in addition 
also improves the achievable data rate performance. The explanation is that in the case of
randomization most of the tones converge to good per-tone solutions.
When combining this with the equalization procedure, this will cause the
fewer poor per-tone solutions to be forced to the larger set of good 
per-tone solutions, resulting in a better overall performance.

The performance of IPDB is good when using the two-tone rand DoV transformation
of (\ref{eq:2toneDoVrand}). For a 1~dB granularity, one can see a performance
improvement of up to 22\% compared to the standard ISB configuration, and up to 1\%
compared to the best ISB configuration with RP and EQ ON. The IPDB performance is not so good 
for the two-tone and three-tone 2 DoV transformations of (\ref{eq:2toneDoV}) 
and (\ref{eq:3tone2}), respectively. However when
combining these DoV transformations with the equalization procedure,
their performance is improved spectacularly. Reducing the granularity from 
1~dBm/Hz steps to 10~dBm/Hz steps decreases the performance, 
for all cases. It is interesting to notice that the tone order does not play a significant role.

In terms of achievable data rate performance, it can be summarized that the two-tone 
rand DoV transformation with equalization activated and 1~dBm/Hz granularity offers
the best performance, and performs better than all ISB configurations.

\begin{table*}
 \renewcommand{\arraystretch}{1.3}
 \caption{Weighted sum of data rates performance [Mbps] for the near-far CO-RT scenario of Fig.~\ref{fig:2adsl} with different settings for IPDB. Method Two-tone rand uses DoV transformation (\ref{eq:2toneDoVrand}).
  Tone order (TO) 1 = [1:K], tone order 2 = [K:-1:1], tone order 3 = uniform probability of 1 and 2, tone order 4 = random permuation,
 EP = constant equal init power, RP = random init power satisfying total power constraint. Two right most columns correspond to ISB performance. }\label{tab:datarate}
\begin{tabular}{|c|c|c|c|c|c|c|c|c|c|c|c|c|c|c|c|}
 \hline
 \multicolumn{2}{|c|}{Method} & \multicolumn{4}{|c|}{Two-tone} & \multicolumn{4}{|c|}{Two-tone rand} & \multicolumn{4}{|c|}{Three-tone 2} & \multicolumn{2}{|c|}{ISB}\\
 \hline
 \multicolumn{2}{|c|}{Equalization} & \multicolumn{2}{|c|}{EQ OFF} & \multicolumn{2}{|c|}{EQ ON}& \multicolumn{2}{|c|}{EQ OFF} & \multicolumn{2}{|c|}{EQ ON}& \multicolumn{2}{|c|}{EQ OFF} & \multicolumn{2}{|c|}{EQ ON}& EQ OFF & EQ ON\\
 \hline
 \multicolumn{2}{|c|}{Granularity} & 1dB & 10dB & 1dB & 10dB & 1dB & 10dB & 1dB & 10dB & 1dB & 10dB & 1dB & 10dB & 0.5dB & 0.5dB\\
 \hline
 \multirow{2}{*}{TO 1} & EP & \cellcolor{blue!25}1.5080 & 1.4223 & \cellcolor{blue!25}1.8608 & 1.6542 & \cellcolor{blue!25}1.8542 & 1.8320 & \cellcolor{blue!25}1.8978 & 1.8519& \cellcolor{blue!25}1.5143 & 1.3780 & \cellcolor{blue!25} 1.8324 & 1.3875 & \cellcolor{LightCyan}1.5549 &  \cellcolor{LightCyan}1.5549 \\ 
 \cline{2-16}
 & RP & \cellcolor{blue!25}1.3858 & 1.3739 & \cellcolor{blue!25}1.7538 & 1.7892 & \cellcolor{blue!25}1.8087 & 1.8247 & \cellcolor{blue!25}1.8951 & 1.8775  & \cellcolor{blue!25}1.4065 & 1.3930 & \cellcolor{blue!25}1.8135 & 1.7490& \cellcolor{LightCyan}1.8274 & \cellcolor{LightCyan}1.8799 \\
 \hline
 \multirow{2}{*}{TO 2} & EP & \cellcolor{blue!25}1.5826 & 1.4626 & \cellcolor{blue!25}1.7914 & 1.6759 & \cellcolor{blue!25}1.7872 & 1.8358 & \cellcolor{blue!25}1.8761 & 1.8374 & \cellcolor{blue!25}1.5066 & 1.3552 & \cellcolor{blue!25}1.8439 & 1.3298& \cellcolor{Gray}& \cellcolor{Gray} \\ 
 \cline{2-16}
 & RP & \cellcolor{blue!25}1.3724 & 1.3388 & \cellcolor{blue!25}1.7069 & 1.7689 & \cellcolor{blue!25}1.7940 & 1.8212 & \cellcolor{blue!25}1.8806 & 1.8507 & \cellcolor{blue!25}1.4114 & 1.3657 & \cellcolor{blue!25}1.7985 & 1.7705& \cellcolor{Gray} & \cellcolor{Gray}\\
 \hline
 \multirow{2}{*}{TO 3} & EP & \cellcolor{blue!25}1.5393 & 1.4459 & \cellcolor{blue!25}1.8647 & 1.6432 & \cellcolor{blue!25}1.8201 & 1.8345 & \cellcolor{blue!25}1.8731 & 1.8611 & \cellcolor{blue!25}1.4978 & 1.3772 & \cellcolor{blue!25}1.8316 & 1.3758 & \cellcolor{Gray}& \cellcolor{Gray}\\ 
 \cline{2-16}
 & RP & \cellcolor{blue!25}1.3957 & 1.3834 & \cellcolor{blue!25}1.7759 & 1.7380 & \cellcolor{blue!25}1.7975 & 1.8236 & \cellcolor{blue!25}1.8831 & 1.8503 & \cellcolor{blue!25}1.3887 & 1.4134 & \cellcolor{blue!25}1.8428 & 1.7149 & \cellcolor{Gray}&\cellcolor{Gray} \\
 \hline
 \multirow{2}{*}{TO 4} & EP & \cellcolor{blue!25}1.4428 & 1.4422 & \cellcolor{blue!25}1.8543 & 1.5426 & \cellcolor{blue!25}1.8240 & 1.8308 & \cellcolor{blue!25}1.8739 & 1.8394 & \cellcolor{blue!25}1.4564 & 1.3624 & \cellcolor{blue!25}1.7925 & 1.4202& \cellcolor{Gray}&\cellcolor{Gray} \\ 
 \cline{2-16}
 & RP & \cellcolor{blue!25}1.3972 & 1.3948 & \cellcolor{blue!25}1.8045 & 1.7071 & \cellcolor{blue!25}1.8099 & 1.8147 & \cellcolor{blue!25}1.8854 & 1.8620 & \cellcolor{blue!25}1.4265 & 1.3709 & \cellcolor{blue!25}1.8375 & 1.7113 & \cellcolor{Gray}&\cellcolor{Gray} \\
  \hline
\end{tabular}
\end{table*}

\subsubsection{Convergence Speed}

Tables~\ref{tab:convspeed99} and \ref{tab:convspeed999} display the convergence speed
in terms of the number of outer iterations (loop corresponding to lines 3-14) to converge to
99\% and 99.9\% weighted achievable sum data rate performance, respectively.

The number of outer iterations varies strongly for different IPDB configurations.
A very fast convergence is achieved for the two-tone rand DoV transformation. 
In particular, we see convergence after less than 10 outer iterations (for 99\% performance)
and 16 outer iterations (for 99.9\% performance), when using the settings 
EQ OFF and 1~dBm/Hz granularity. For the EQ ON setting, this increases to only 
20 outer iterations for both 1~dBm/Hz and 10~dBm/Hz granularity. 
The other DoV transformations require up to 200 iterations to converge. 
In general, a larger granularity results in a slower convergence. 

ISB only requires 2 to 6 outer iterations to converge. 
However, we want to highlight that ISB requires a dual 
optimization step to satisfy the total power constraints within each outer 
iteration, which is not required for IPDB. The complexity per outer iteration
is thus larger for ISB.

For illustration, in Figures~\ref{fig:wrateevolution} and \ref{fig:powerevolution} 
the evolution of the weighted sum of achievable data rates and the per-user total powers 
is displayed as a function of the outer iterations for a single run. It 
can be seen that 99\% and 99.9\% performance are achieved after
8 and 12 outer iterations. The per-user total powers constraints are always satisfied, which
demonstrates the real-time property of IPDB.

 \begin{figure}[!t]
  \centering
  \includegraphics[width=0.95\columnwidth]{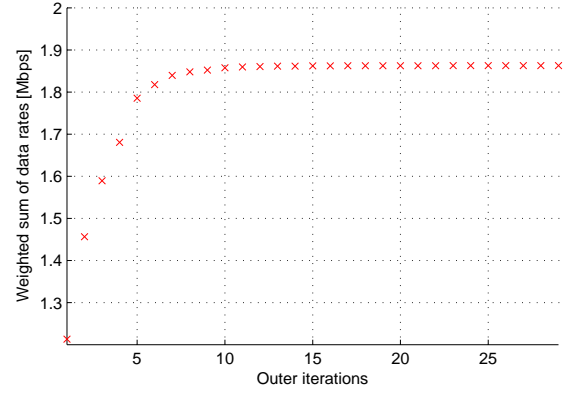}
  \caption{Evolution of weighted sum of data rates for IPDB algorithm}
  \label{fig:wrateevolution}
 \end{figure}

\begin{figure}[!t]
  \centering
  \includegraphics[width=0.95\columnwidth]{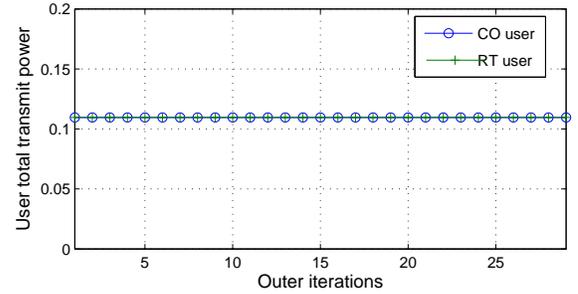}
  \caption{Evolution of user powers for IPDB algorithm}
  \label{fig:powerevolution}
 \end{figure}

\begin{table*}
 \renewcommand{\arraystretch}{1.3}
 \caption{Convergence speed [nb of outer iterations] to converge to 99\% performance for near-far CO-RT scenario of Fig.~\ref{fig:2adsl} with different settings for IPDB. Method Two-tone rand is with permutated tones on the transformation.
 Tone order (TO) 1 = [1:K], tone order 2 = [K:-1:1], tone order 3 = uniform probability of 1 and 2, tone order 4 = random permuation,
 EP = constant equal init power, RP = random init power satisfying total power constraint. Two right most columns correspond to ISB performance.}\label{tab:convspeed99}
\begin{tabular}{|c|c|c|c|c|c|c|c|c|c|c|c|c|c|c|c|}
 \hline
 \multicolumn{2}{|c|}{Method} & \multicolumn{4}{|c|}{Two-tone} & \multicolumn{4}{|c|}{Two-tone rand} & \multicolumn{4}{|c|}{Three-tone 2} & \multicolumn{2}{|c|}{ISB}\\
 \hline
 \multicolumn{2}{|c|}{Equalization} & \multicolumn{2}{|c|}{EQ OFF} & \multicolumn{2}{|c|}{EQ ON}& \multicolumn{2}{|c|}{EQ OFF} & \multicolumn{2}{|c|}{EQ ON}& \multicolumn{2}{|c|}{EQ OFF} & \multicolumn{2}{|c|}{EQ ON}& EQ OFF & EQ ON\\
 \hline
 \multicolumn{2}{|c|}{Granularity} & 1~dB & 10~dB & 1~dB & 10~dB & 1~dB & 10~dB & 1~dB & 10~dB & 1~dB & 10~dB & 1~dB & 10~dB & 0.5~dB & 0.5~dB\\
 \hline
 \multirow{2}{*}{TO 1} & EP & \cellcolor{blue!25}46.00 & 40.00 & \cellcolor{blue!25}150.00 & 162.00 & \cellcolor{blue!25}5.67 & 8.20 & \cellcolor{blue!25}10.53 & 19.67 & \cellcolor{blue!25}80.00 & 56.00 & \cellcolor{blue!25}101.00 & 103.00 & \cellcolor{LightCyan} 2.00 &  \cellcolor{LightCyan} 2.00\\ 
 \cline{2-16}
 & RP & \cellcolor{blue!25}41.33 & 61.67 & \cellcolor{blue!25}124.13 & 162.13 & \cellcolor{blue!25}8.60 & 14.60 & \cellcolor{blue!25}17.87  & 15.00 & \cellcolor{blue!25}31.67 & 44.13 & \cellcolor{blue!25}133.20 & 169.00 &  \cellcolor{LightCyan}2.00 & \cellcolor{LightCyan}5.00 \\
 \hline
  \multirow{2}{*}{TO 2} & EP &\cellcolor{blue!25}29.00 & 52.00 & \cellcolor{blue!25}61.00 & 174.00 & \cellcolor{blue!25}5.20 & 8.00 & \cellcolor{blue!25}10.87 & 8.73 & \cellcolor{blue!25}22.00 & 59.00 & \cellcolor{blue!25}116.00 & 177.00 & \cellcolor{Gray}& \cellcolor{Gray} \\ 
 \cline{2-16}
  & RP & \cellcolor{blue!25}47.73 & 57.40 & \cellcolor{blue!25}99.60 & 143.87 & \cellcolor{blue!25}8.33 & 13.40 & \cellcolor{blue!25}15.53 & 20.20 & \cellcolor{blue!25}30.47 & 45.27 & \cellcolor{blue!25}122.07 & 163.93& \cellcolor{Gray} & \cellcolor{Gray}\\
 \hline
  \multirow{2}{*}{TO 3} & EP & \cellcolor{blue!25}30.20 & 43.93 & \cellcolor{blue!25}104.80 & 169.53 & \cellcolor{blue!25}5.33 & 8.13 & \cellcolor{blue!25}10.60 & 11.67 & \cellcolor{blue!25}28.60 & 88.40 & \cellcolor{blue!25}111.20 & 144.87 & \cellcolor{Gray}& \cellcolor{Gray}\\ 
 \cline{2-16}
  & RP & \cellcolor{blue!25}28.53 & 44.00 & \cellcolor{blue!25}110.27 & 148.13 & \cellcolor{blue!25}9.47 & 14.47 & \cellcolor{blue!25}11.80 & 13.80 & \cellcolor{blue!25}29.80 & 45.40 & \cellcolor{blue!25}125.40 & 171.33& \cellcolor{Gray}&\cellcolor{Gray} \\
 \hline
 \multirow{2}{*}{TO 4} & EP & \cellcolor{blue!25}16.73 & 43.33 & \cellcolor{blue!25}42.87 & 162.87 & \cellcolor{blue!25}5.27 & 8.20 & \cellcolor{blue!25}9.60 & 11.60 & \cellcolor{blue!25}36.07 & 62.40 & \cellcolor{blue!25}126.93 & 137.13& \cellcolor{Gray}&\cellcolor{Gray} \\ 
 \cline{2-16}
 & RP & \cellcolor{blue!25}43.73 & 60.53 & \cellcolor{blue!25}94.27 & 162.07 & \cellcolor{blue!25}8.40 & 13.80 & \cellcolor{blue!25}13.33 & 20.13 & \cellcolor{blue!25}26.33 & 41.87 & \cellcolor{blue!25}125.80 & 163.13 & \cellcolor{Gray}&\cellcolor{Gray} \\
  \hline
\end{tabular}
\end{table*}

\begin{table*}
 \renewcommand{\arraystretch}{1.3}
 \caption{Convergence speed [nb of outer iterations] to converge to 99.9\% performance for near-far CO-RT scenario of Fig.~\ref{fig:2adsl} with different settings for IPDB. Method Two-tone rand is with permutated tones on the transformation.
 Tone order (TO) 1 = [1:K], tone order 2 = [K:-1:1], tone order 3 = uniform probability of 1 and 2, tone order 4 = random permuation,
 EP = constant equal init power, RP = random init power satisfying total power constraint. Two right most columns correspond to ISB performance.}\label{tab:convspeed999}
\begin{tabular}{|c|c|c|c|c|c|c|c|c|c|c|c|c|c|c|c|}
 \hline
 \multicolumn{2}{|c|}{Method} & \multicolumn{4}{|c|}{Two-tone} & \multicolumn{4}{|c|}{Two-tone rand} & \multicolumn{4}{|c|}{Three-tone 2} & \multicolumn{2}{|c|}{ISB}\\
 \hline
 \multicolumn{2}{|c|}{Equalization} & \multicolumn{2}{|c|}{EQ OFF} & \multicolumn{2}{|c|}{EQ ON}& \multicolumn{2}{|c|}{EQ OFF} & \multicolumn{2}{|c|}{EQ ON}& \multicolumn{2}{|c|}{EQ OFF} & \multicolumn{2}{|c|}{EQ ON}& EQ OFF & EQ ON\\
 \hline
 \multicolumn{2}{|c|}{Granularity} & 1~dB & 10~dB & 1~dB & 10~dB & 1~dB & 10~dB & 1~dB & 10~dB & 1~dB & 10~dB & 1~dB & 10~dB & 0.5~dB & 0.5~dB\\
 \hline
  \multirow{2}{*}{TO 1} & EP & \cellcolor{blue!25}84.00 & 68.00 & \cellcolor{blue!25}155.00 & 174.00 & \cellcolor{blue!25}11.07 & 15.53 & \cellcolor{blue!25}14.00 & 25.13 & \cellcolor{blue!25}83.00 & 89.00 & \cellcolor{blue!25}169.00 & 104.00 & \cellcolor{LightCyan} 2.00&  \cellcolor{LightCyan}2.00 \\ 
 \cline{2-16}
  & RP & \cellcolor{blue!25}118.33 & 156.40 & \cellcolor{blue!25}144.60 & 191.53 & \cellcolor{blue!25}14.07 & 22.13 & \cellcolor{blue!25}22.60 & 22.20 & \cellcolor{blue!25}93.53 & 138.07 & \cellcolor{blue!25}157.07 & 189.80 &  \cellcolor{LightCyan}2.25 & \cellcolor{LightCyan}6.25 \\
 \hline
  \multirow{2}{*}{TO 2} & EP & \cellcolor{blue!25}32.00 & 66.00 & \cellcolor{blue!25}63.00 & 200.00 & \cellcolor{blue!25}10.60 & 15.13 & \cellcolor{blue!25}15.87 & 13.87 & \cellcolor{blue!25}50.00 & 70.00 & \cellcolor{blue!25}184.00 & 198.00& \cellcolor{Gray}& \cellcolor{Gray} \\ 
 \cline{2-16}
  & RP &\cellcolor{blue!25}125.53 & 137.53 & \cellcolor{blue!25}124.40 & 168.13 & \cellcolor{blue!25}13.33 & 20.73 & \cellcolor{blue!25}21.13 & 27.07 & \cellcolor{blue!25}82.47 & 127.80 & \cellcolor{blue!25}163.27 & 189.87 & \cellcolor{Gray} & \cellcolor{Gray}\\
 \hline
  \multirow{2}{*}{TO 3} & EP & \cellcolor{blue!25}45.60 & 58.33 & \cellcolor{blue!25}147.47 & 188.40 & \cellcolor{blue!25}10.80 & 15.67 & \cellcolor{blue!25}17.73 & 16.80 & \cellcolor{blue!25}41.27 & 109.93 & \cellcolor{blue!25}149.07 & 158.07 & \cellcolor{Gray}& \cellcolor{Gray}\\ 
 \cline{2-16}
  & RP & \cellcolor{blue!25}102.67 & 116.13 & \cellcolor{blue!25}151.73 & 177.67 & \cellcolor{blue!25}15.13 & 22.80 & \cellcolor{blue!25}16.93 & 21.60 & \cellcolor{blue!25}103.73 & 140.27 & \cellcolor{blue!25}166.87 & 188.27& \cellcolor{Gray}&\cellcolor{Gray} \\
 \hline
  \multirow{2}{*}{TO 4} & EP & \cellcolor{blue!25}59.13 & 67.20 & \cellcolor{blue!25}63.73 & 177.07 & \cellcolor{blue!25}11.27 & 16.27 & \cellcolor{blue!25}14.13 & 17.93 & \cellcolor{blue!25}42.53 & 82.87 & \cellcolor{blue!25}138.87 & 164.93& \cellcolor{Gray}&\cellcolor{Gray} \\ 
 \cline{2-16}
 & RP & \cellcolor{blue!25}119.87 & 141.73 & \cellcolor{blue!25}143.07 & 188.00 & \cellcolor{blue!25}13.33 & 21.00 & \cellcolor{blue!25}18.67 & 26.93 & \cellcolor{blue!25}95.07 & 125.53 & \cellcolor{blue!25}143.47 & 183.27 & \cellcolor{Gray}&\cellcolor{Gray} \\
  \hline
\end{tabular}
\end{table*}

\subsubsection{Complexity Comparison IPDB versus ISB}

To compare the relative computational complexity of IPDB and
ISB, Tables~\ref{tab:compl99} and \ref{tab:compl999} display the relative number of 
bit calculations (\ref{eq:bitrate}) for an 99\%
and 99.9\% accuracy, respectively, where the computational complexity of ISB is taken 
as a reference. Note that for the dual optimization part in ISB, the dual search
is optimized with tuned settings. As IPDB is a primal algorithm, it does
not have a dual optimization part, at the cost of more outer iterations.
However, it can be seen that IPDB has a much lower overal computational complexity for specific configuration settings. 
In particular the two-tone rand DoV transformation (\ref{eq:2toneDoVrand}) with 10~dBm/Hz granularity reduces complexity
by a factor 20 (for 99\% performance) and factor 12 (for 99.9\% performance).
For IPDB with the two-tone rand DoV transformation, EQ ON and 1~dBm/Hz granularity,
there is a 5\% to 50\% complexity reduction compared to ISB.

\begin{table*}
 \renewcommand{\arraystretch}{1.3}
 \caption{Relative complexity (wrt ISB standard method) to converge to 99\% performance for near-far CO-RT scenario of Fig.~\ref{fig:2adsl} with different settings for IPDB. Method Two-tone rand is with permutated tones on the transformation.
 Tone order (TO) 1 = [1:K], tone order 2 = [K:-1:1], tone order 3 = uniform probability of 1 and 2, tone order 4 = random permuation,
 EP = constant equal init power, RP = random init power satisfying total power constraint. Two right most columns correspond to ISB performance.}\label{tab:compl99}
\begin{tabular}{|c|c|c|c|c|c|c|c|c|c|c|c|c|c|c|c|}
 \hline
 \multicolumn{2}{|c|}{Method} & \multicolumn{4}{|c|}{Two-tone} & \multicolumn{4}{|c|}{Two-tone rand} & \multicolumn{4}{|c|}{Three-tone 2} & \multicolumn{2}{|c|}{ISB}\\
 \hline
 \multicolumn{2}{|c|}{Equalization} & \multicolumn{2}{|c|}{EQ OFF} & \multicolumn{2}{|c|}{EQ ON}& \multicolumn{2}{|c|}{EQ OFF} & \multicolumn{2}{|c|}{EQ ON}& \multicolumn{2}{|c|}{EQ OFF} & \multicolumn{2}{|c|}{EQ ON}& EQ OFF & EQ ON\\
 \hline
 \multicolumn{2}{|c|}{Granularity} & 1~dB & 10~dB & 1~dB & 10~dB & 1~dB & 10~dB & 1~dB & 10~dB & 1~dB & 10~dB & 1~dB & 10~dB & 0.5~dB & 0.5~dB\\
 \hline
  \multirow{2}{*}{TO 1} & EP & \cellcolor{blue!25}2.5164 & 0.2590 & \cellcolor{blue!25}7.6023 & 1.0543 & \cellcolor{blue!25}0.2739 & 0.0521 & \cellcolor{blue!25}0.4999 & 0.1178 & \cellcolor{blue!25}3.7865 & 0.3494 & \cellcolor{blue!25}5.1400 & 0.6725 & \cellcolor{LightCyan} 1.00 &  \cellcolor{LightCyan} 1.00\\ 
 \cline{2-16}
  & RP & \cellcolor{blue!25}1.7928 & 0.3406 & \cellcolor{blue!25}5.6130 & 1.0085 & \cellcolor{blue!25}0.3594 & 0.0792 & \cellcolor{blue!25}0.7443 & 0.0878 & \cellcolor{blue!25}1.4658 & 0.2554 & \cellcolor{blue!25}6.4844 & 1.0529 &  \cellcolor{LightCyan} & \cellcolor{LightCyan} \\
 \hline
  \multirow{2}{*}{TO 2} & EP &\cellcolor{blue!25}1.5272 & 0.3364 & \cellcolor{blue!25}3.0089 & 1.1248 & \cellcolor{blue!25}0.2723 & 0.0521 & \cellcolor{blue!25}0.4949 & 0.0521 & \cellcolor{blue!25}1.1267 & 0.3675 & \cellcolor{blue!25}5.0723 & 1.1433 & \cellcolor{Gray}& \cellcolor{Gray} \\ 
 \cline{2-16}
  & RP & \cellcolor{blue!25}2.0256 & 0.3157 & \cellcolor{blue!25}4.3902 & 0.8718 & \cellcolor{blue!25}0.3604 & 0.0741 & \cellcolor{blue!25}0.6630 & 0.1159 & \cellcolor{blue!25}1.3955 & 0.2603 & \cellcolor{blue!25}5.6175 & 1.0148 & \cellcolor{Gray} & \cellcolor{Gray}\\
 \hline
  \multirow{2}{*}{TO 3} & EP & \cellcolor{blue!25}1.6278 & 0.2804 & \cellcolor{blue!25}5.0953 & 1.1024 & \cellcolor{blue!25}0.2731 & 0.0521 & \cellcolor{blue!25}0.4973 & 0.0705 & \cellcolor{blue!25}1.4297 & 0.5400 & \cellcolor{blue!25}5.0541 & 0.9348 & \cellcolor{Gray}& \cellcolor{Gray}\\ 
 \cline{2-16}
  & RP & \cellcolor{blue!25}1.2612 & 0.2516 & \cellcolor{blue!25}4.9723 & 0.9191 & \cellcolor{blue!25}0.3991 & 0.0796 & \cellcolor{blue!25}0.4947 & 0.0765 & \cellcolor{blue!25}1.3478 & 0.2626 & \cellcolor{blue!25}5.8502 & 1.0582 & \cellcolor{Gray}&\cellcolor{Gray} \\
 \hline
  \multirow{2}{*}{TO 4} & EP & \cellcolor{blue!25}0.8795 & 0.2802 & \cellcolor{blue!25}2.2966 & 1.0672 & \cellcolor{blue!25}0.2729 & 0.0521 & \cellcolor{blue!25}0.4543 & 0.0706 & \cellcolor{blue!25}1.8095 & 0.3854 & \cellcolor{blue!25}6.1238 &  0.8910 & \cellcolor{Gray}&\cellcolor{Gray} \\ 
 \cline{2-16}
  & RP & \cellcolor{blue!25}1.9321 & 0.3380 & \cellcolor{blue!25}4.3290 & 1.0169 & \cellcolor{blue!25}0.3631 & 0.0740 & \cellcolor{blue!25}0.5791 & 0.1167 & \cellcolor{blue!25}1.2467 & 0.2398 & \cellcolor{blue!25}6.2019 & 1.0213 & \cellcolor{Gray}&\cellcolor{Gray} \\
  \hline
\end{tabular}
\end{table*}

\begin{table*}
 \renewcommand{\arraystretch}{1.3}
 \caption{Relative complexity (wrt ISB standard method) to converge to 99.9\% performance for near-far CO-RT scenario of Fig.~\ref{fig:2adsl} with different settings for IPDB. Method Two-tone rand is with permutated tones on the transformation.
 Tone order (TO) 1 = [1:K], tone order 2 = [K:-1:1], tone order 3 = uniform probability of 1 and 2, tone order 4 = random permuation,
 EP = constant equal init power, RP = random init power satisfying total power constraint. Two right most columns correspond to ISB performance.}\label{tab:compl999}
\begin{tabular}{|c|c|c|c|c|c|c|c|c|c|c|c|c|c|c|c|}
 \hline
 \multicolumn{2}{|c|}{Method} & \multicolumn{4}{|c|}{Two-tone} & \multicolumn{4}{|c|}{Two-tone rand} & \multicolumn{4}{|c|}{Three-tone 2} & \multicolumn{2}{|c|}{ISB}\\
 \hline
 \multicolumn{2}{|c|}{Equalization} & \multicolumn{2}{|c|}{EQ OFF} & \multicolumn{2}{|c|}{EQ ON}& \multicolumn{2}{|c|}{EQ OFF} & \multicolumn{2}{|c|}{EQ ON}& \multicolumn{2}{|c|}{EQ OFF} & \multicolumn{2}{|c|}{EQ ON}& EQ OFF & EQ ON\\
 \hline
 \multicolumn{2}{|c|}{Granularity} & 1~dB & 10~dB & 1~dB & 10~dB & 1~dB & 10~dB & 1~dB & 10~dB & 1~dB & 10~dB & 1~dB & 10~dB & 0.5~dB & 0.5~dB\\
 \hline
  \multirow{2}{*}{TO 1} & EP & \cellcolor{blue!25}4.2908 & 0.4292 & \cellcolor{blue!25}7.8222 & 1.1298 & \cellcolor{blue!25}0.5398 & 0.0945 & \cellcolor{blue!25}0.6694 & 0.1513 & \cellcolor{blue!25}3.9013 & 0.5441 & \cellcolor{blue!25}7.9446 & 0.6789 & \cellcolor{LightCyan} 1.00&  \cellcolor{LightCyan}1.00 \\ 
 \cline{2-16}
  & RP & \cellcolor{blue!25}4.5991 & 0.7979 & \cellcolor{blue!25}6.4685 & 1.1828 & \cellcolor{blue!25}0.6071 & 0.1217 & \cellcolor{blue!25}0.9525 & 0.1274 & \cellcolor{blue!25}3.7376 & 0.7171 & \cellcolor{blue!25}7.5192 & 1.1739 & \cellcolor{LightCyan} & \cellcolor{LightCyan} \\
 \hline
  \multirow{2}{*}{TO 2} & EP & \cellcolor{blue!25}1.6558 & 0.4217 & \cellcolor{blue!25}3.0948 & 1.2865 & \cellcolor{blue!25}0.4924 & 0.0945 & \cellcolor{blue!25}0.7036 & 0.0826 & \cellcolor{blue!25}2.1677 & 0.4326 & \cellcolor{blue!25}7.8462 & 1.2738 & \cellcolor{Gray}& \cellcolor{Gray} \\ 
 \cline{2-16}
  & RP &  \cellcolor{blue!25}4.7169 & 0.7045 & \cellcolor{blue!25}5.3924 & 1.0159 & \cellcolor{blue!25}0.5668 & 0.1118 & \cellcolor{blue!25}0.9126 & 0.1541 & \cellcolor{blue!25}3.2402 & 0.6662 & \cellcolor{blue!25}7.3323 & 1.1704 & \cellcolor{Gray} & \cellcolor{Gray}\\
 \hline
 \multirow{2}{*}{TO 3} & EP & \cellcolor{blue!25}2.3030 & 0.3722 & \cellcolor{blue!25}6.9218 & 1.2221 & \cellcolor{blue!25}0.4959 & 0.0945 & \cellcolor{blue!25}0.7904 & 0.1003 & \cellcolor{blue!25}1.9487 & 0.6497 & \cellcolor{blue!25}6.5151 & 1.0230 & \cellcolor{Gray}& \cellcolor{Gray}\\ 
 \cline{2-16}
  & RP & \cellcolor{blue!25}3.9985 &  0.6127 & \cellcolor{blue!25}6.6951 & 1.0937 & \cellcolor{blue!25}0.6448 & 0.1223 & \cellcolor{blue!25}0.7063 & 0.1223 & \cellcolor{blue!25}3.7474 & 0.7235 & \cellcolor{blue!25}7.5595 & 1.1591  & \cellcolor{Gray}&\cellcolor{Gray} \\
 \hline
  \multirow{2}{*}{TO 4} & EP & \cellcolor{blue!25}2.8825 & 0.4265 & \cellcolor{blue!25}3.3138 & 1.1623 & \cellcolor{blue!25}0.5368 & 0.1004 & \cellcolor{blue!25}0.6662 & 0.1063 & \cellcolor{blue!25}2.0686 & 0.4900 & \cellcolor{blue!25}6.6522 & 1.0611 & \cellcolor{Gray}&\cellcolor{Gray} \\ 
 \cline{2-16}
  & RP & \cellcolor{blue!25}4.9687 & 0.7453 & \cellcolor{blue!25}6.4080 & 1.1770 & \cellcolor{blue!25}0.5702 & 0.1170 & \cellcolor{blue!25}0.7883 & 0.1498 & \cellcolor{blue!25}3.6000 & 0.6725 & \cellcolor{blue!25}7.0006 & 1.1423 & \cellcolor{Gray}&\cellcolor{Gray} \\
  \hline
\end{tabular}
\end{table*}

\subsubsection{Real-time Property}\label{sec:realtimeproperty}

A main strength of IPDB is its real-time property. To compare with ISB, 
the number of power updates is determined that ISB maximally requires 
to satisfy the per-user total power constraints taking the inner 
dual optimization step into account.
Compared to one single power difference variable (and thus power) update for IPDB, 
ISB requires $1.5~10^6$ power updates to satisfy the per-user total power constraints (in worst case). This corresponds
to the worst case number of power updates (over all outer iterations) 
that a user needs to converge to transmit powers so as to satisfy its 
total power constraint. ISB thus does not qualify as a RT-DSM algorithm, in contrast to IPDB.

\subsubsection{Equalization Performance}\label{sec:equalperf}

To demonstrate the equalization impact, IPDB is simulated with RP, TO~4, 1~dBm/Hz granularity, two-tone rand DoV transformation and
equalization procedure Algorithm~\ref{algo:rtso_equal} activated (EQ ON) 
for every outer iteration which is an integer multiple of 5.

The resulting transmit spectra before and after equalization are shown
for iteration 5 and 10 in Figure~\ref{fig:idbloops}. It can be seen that
after the equalization step the transmit spectra display fewer spikes.
In this case, only after two equalization steps, all spikes are removed, 
demonstrating the effectiveness of the equalization procedure.

 \begin{figure*}[!t]
  \centering
  \includegraphics[width=1.85\columnwidth]{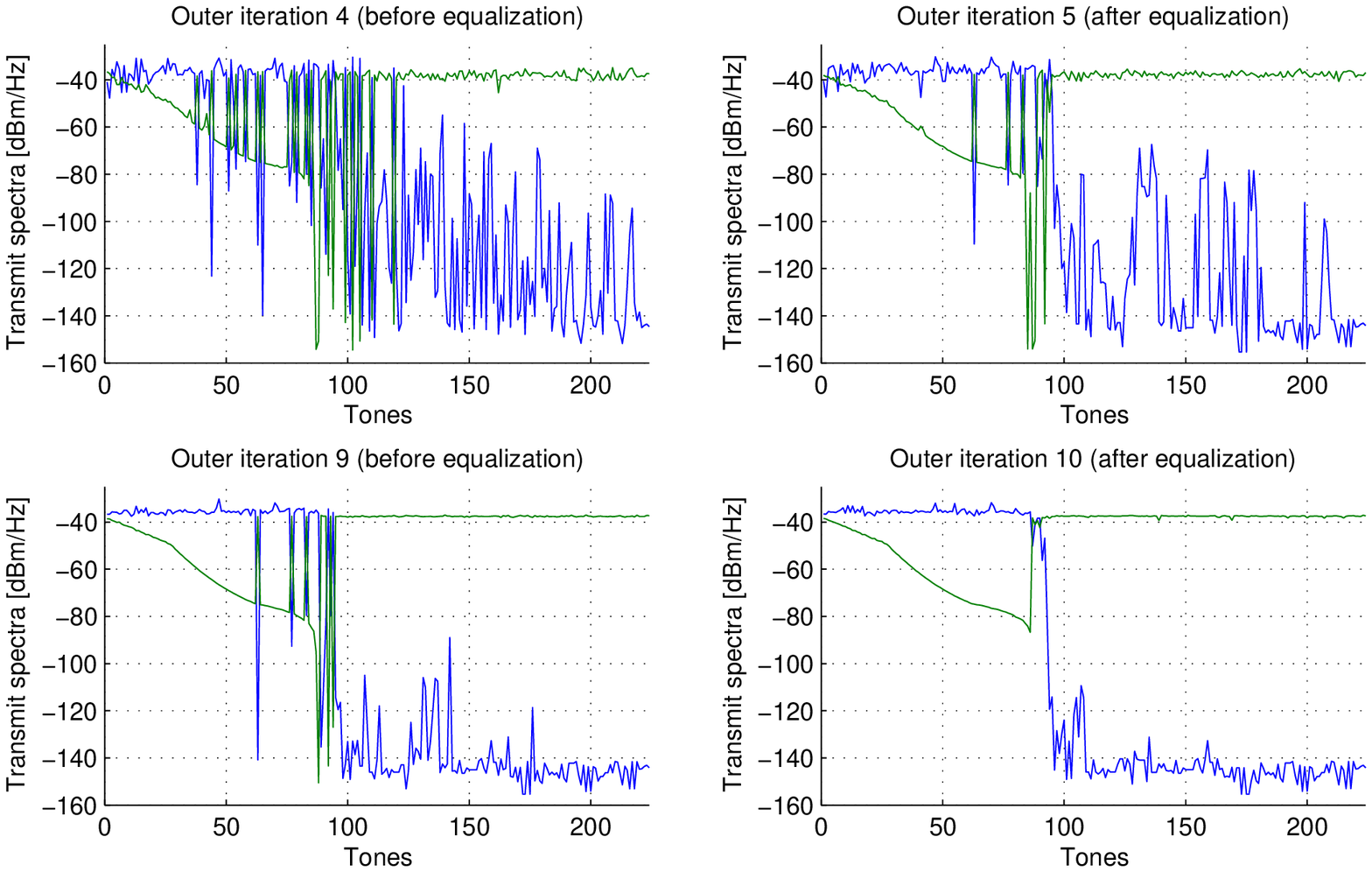}
  \caption{Impact when equalization procedure Algorithm~\ref{algo:rtso_equal} is
  activated for IPDB at outer iteration 5 and 10, before and after the equalizations. Transmit spectra of CO-user and RT-user in Figure~\ref{fig:2adsl} are displayed in 
  blue and green, respectively.}
  \label{fig:idbloops}
 \end{figure*}

\subsection{Impact of Discretization Granularity: ADSL2+ case}\label{sec:gran}

As mentioned in Section~\ref{sec:idb}, the discrete
grid-based search granularity for the power difference variables for IPDB can be chosen
coarser than for the transmit powers for ISB, i.e., 1~dBm/Hz instead of 0.5~dBm/Hz. 
In this section we assess the concrete impact of different granularities for an ADSL2+ scenario
as given in Figure~\ref{fig:adsl2plusscenario}, in terms final transmit spectra. The downstream ADSL2+ scenario consists of 12 users with line lengths
5000m, 4000m, 3000m, 2000m, 2000m, 1000m, 4800m, 3800m, 2800m, 2300m, 1500m, and 1300m.
The distances (between CO and RTs) are 0m, 0m, 1000m, 1000m, 2000m, 2000m, 0m, 0m, 1200m, 1200m, 2400m, and 2400m.

\begin{figure}[!t]
  \centering
  \includegraphics[width=0.75\columnwidth]{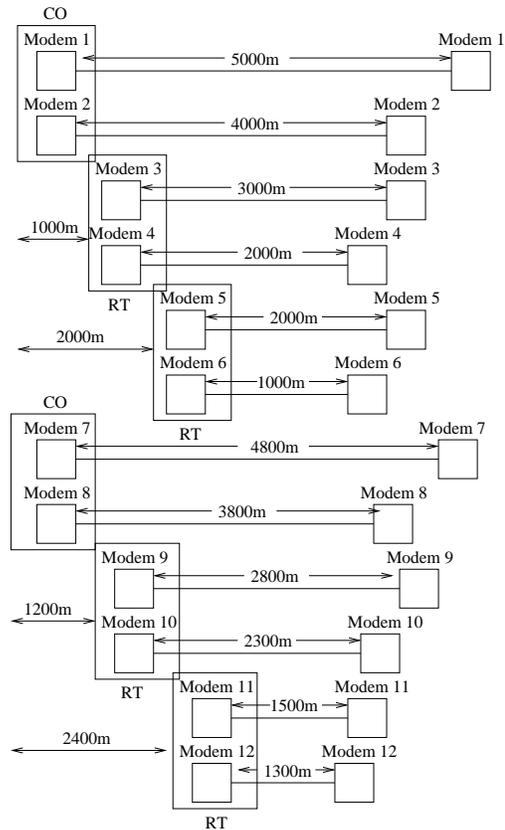}
  \caption{ADSL2+ scenario}
  \label{fig:adsl2plusscenario}
 \end{figure}

We compare three configurations with each other: a) IPDB with 1~dBm/Hz granularity,
b) IPDB with 10~dBm/Hz granularity, and c) ISB with 0.5~dBm/Hz granularity.
For the three configurations, we start from the same initial transmit powers
(EP setting) and all three converge to similar solutions. The weighted sum achievable data rate
performance corresponds to: a) 16.0072~Mbps, b) 15.9960~Mbps, c) 16.0055~Mbps.

In Figure~\ref{fig:granularity}, the resulting transmit spectra for user 7 are zoomed out
for tones 230 to 420 for the three above configurations. It can be seen that ISB makes
steps of 0.5~dBm/Hz. In contrast, the IPDB methods (for both granularities) display 
a smaller step variation in magnitude, which demonstrates that a much coarser
granularity for IPDB does not impact the shape of the resulting transmit spectra too much.
There is some level of non-smooth behaviour for IPDB though. The equalization 
procedure is not able to remove this, because the equalization 
threshold is set at 10~dBm/Hz (as shown in lines 3,4,6,7 in Algorithm~\ref{algo:rtso_equal}).

\begin{figure}[!t]
\centering
\includegraphics[width=0.95\columnwidth]{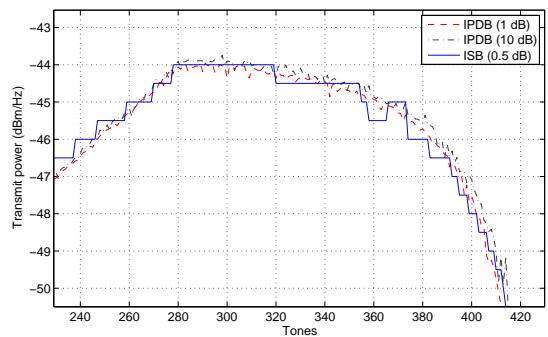}
\caption{Zoom of transmit spectra for user 7 of 12-user ADSL2+ scenario}
\label{fig:granularity}
\end{figure}

\subsection{Inequality Constraints: VDSL case}\label{sec:inequalsim}

For typical DSL scenarios, all users are allocated their full available transmit power satisfying
the total power constraints with equality, i.e., $P^n = P^{n,\mathrm{tot}}$. 
However, for multi-user large crosstalk settings and under specific values for the weights $w_n$, 
it is possible that some users better not be allocated all available per-user total power. 
Here we consider a 6-user VDSL upstream scenario, with 6 CO-connected lines
with line lengths 1200m, 1000m, 800m, 600m, 450m, and 300m, corresponding 
to users 1 to 6, respectively.
For this scenario, users 4 and 5 are not allocated all available per-user total power, i.e.,
$P^n < P^{n,\mathrm{tot}}$. When running the IPDB algorithm (with RP, EQ ON,
1~dBm/Hz granularity, TO 4, DoV transformation (\ref{eq:2toneDoVrand})), with the inequality procedure
of Algorithm~\ref{algo:rtso_inequal} (with  $\alpha=1.1$ and $\beta = 0.8$), the evolution of the allocated per-user total 
powers is shown in Figure~\ref{fig:inequality}. 
It can be seen that the per-user total power constraints are always satisfied. After 30 iterations, both
users achieve their final per-user total power allocation corresponding to 23\% and  1.7\% of full power allocation $P^{n,\mathrm{tot}}$.
This demonstrates the real-time property of IPDB while considering inequality constraints.

\begin{figure}[!t]
\centering
\includegraphics[width=0.95\columnwidth]{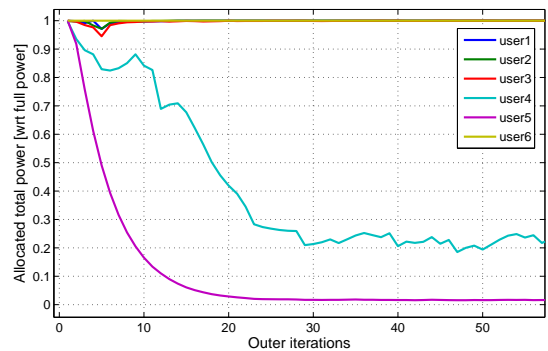}
\caption{Evolution of user powers for IPDB with inequality procedure of Algorithm~\ref{algo:rtso_inequal} for 6-user upstream VDSL scenario.}
\label{fig:inequality}
\end{figure}

\subsection{Downlink Power Control in Heterogeneous Wireless Networks}\label{sec:lte}

As highlighted in Section~\ref{sec:introduction}, the proposed RT-DSM theory and IPDB
algorithm can also be applied to wireless communication settings. One highly relevant problem is downlink power control in heterogeneous
cellullar networks where OFDMA is used within each cell and inter-cell
interference is observed between different (macro, pico, femto) cells.
In \cite{Son2011} it is explained that this consists of two subproblems, namely
a user scheduling part and a power spectrum control part. For the power 
spectrum control part, it is shown how DSM algorithms can lead to spectacular
performance gains. The IPDB algorithm can similarly be applied to this
setting so as to obtain real-time inter-cell interference coordination (ICIC)
for such heterogeneous networks.

The IPDB algorithm is applied here to a system with two interfering cells, one
macrocell and one femtocell. Each cell has one user. The user in the macrocell
is located at the cell edge at a distance of 500m from the macrocell base station 
and 20m from the femtocell base station. The user in the femtocell is located
at the same location but is served by the femtocell base station. It is 
known that this constitutes a challenging interference limited setting. 
We consider a system bandwidth of 5~MHz, a subcarrier spacing of 15~kHz, 
a symbol rate of 14 OFDM symbols in 1ms, 300 subcarriers, a macrocell base 
station transmit power of 43 dBm and a femtocell base station transmit power 
of 15 dBm. The ITU-PED B channel model is used with a pathloss of 
$31.5 + 35 \log_{10}$(distance).

The resulting bit loadings are displayed in Figure~\ref{fig:bitLTE}. As the
scenario deals with two edge users, the resulting transmit power allocations
are OFDMA like. There is a small overlap though in tones 20-22. The goal is however not
to analyze the resulting transmit spectra and bit loadings, but is to  
demonstrate the wide applicability of the proposed RT-DSM theory and algorithm
beyond the wireline DSL setting.

\begin{figure}[!t]
\centering
\includegraphics[width=0.95\columnwidth]{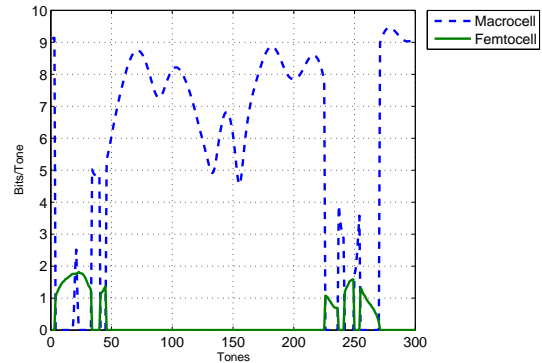}
\caption{Bit allocation with IPDB for LTE heterogeneous network with macro- and
femtocell inter-cell interference.}
\label{fig:bitLTE}
\end{figure}

\section{Conclusion}
We have proposed a new paradigm, theory and algorithm for RT-DSM in multi-user multi-carrier communication systems. 
The RT-DSM algorithm is referred to as IPDB. IPDB is suitable for operation
under tight computation time and compute power constraints, i.e., when a very
fast responsiveness is required. IPDB can be stopped at any moment in time during execution 
while guaranteeing feasibility and improved performance. The IPDB algorithm
design is enabled by a novel transformation, referred to as the DoV transformation, which transforms the standard DSM 
problem into an alternative optimization problem with primal power difference variables.
A coordinate ascent approach is proposed to tackle the reformulated primal problem
with an iterative 1D exact line search via a logarithmicly scaled grid search.
In contrast to existing DSM algorithms that follow a dual
decomposition approach, IPDB solves the DSM problem in 
the primal domain, avoiding any potential issues with a non-zero duality gap, which
can be seen as an important benefit. IPDB is furthermore characterized by a high tunability with additional 
procedures for dealing with inequalities and equalization that result in
improved performance and smooth transmit power spectra. In particular, the 
configuration with the 'two-tone rand' DoV transformation and equalization results in fast 
convergence, good network wide achievable data rate performance, low computational cost, 
and real-time execution, outperforming the existing near-optimal ISB algorithm. This has been validated with simulations 
under different configuration settings for different practical wireline xDSL
scenarios and for a wireless LTE heterogeneous network scenario.


%

%
%
%


\ifCLASSOPTIONcaptionsoff
  \newpage
\fi



%
%
%

\bibliographystyle{IEEEtran}
\bibliography{bibliographyshort}

%

%
%
%




\end{document}